\newtheorem{theorem}{Theorem}[section]
\newtheorem{corollary}[theorem]{Corollary}
\newtheorem{lemma}[theorem]{Lemma}
\newtheorem{proposition}[theorem]{Proposition}
\newtheorem{claim}[theorem]{Claim}
\newtheorem{observation}[theorem]{Observation}
\newtheorem{prop}[theorem]{Proposition}
\theoremstyle{definition} \newtheorem{definition}[theorem]{Definition}
\newcommand{\comm}[1]{}
\newcounter{todocounter}
\newcommand{\notshow}[1]{}
\begin{document}

\title{New Query Lower Bounds for Submodular Function Minimization} 
\author{Andrei Graur \and Tristan Pollner \and Vidhya Ramaswamy \and S. Matthew Weinberg\thanks{Supported by NSF CCF-1717899.}} 
\date{\today}
\maketitle
\begin{abstract} We consider submodular function minimization in the oracle model: given black-box access to a submodular set function $f:2^{[n]}\rightarrow \mathbb{R}$, find an element of $\arg\min_S \{f(S)\}$ using as few queries to $f(\cdot)$ as possible. State-of-the-art algorithms succeed with $\tilde{O}(n^2)$ queries~\cite{LeeSW15}, yet the best-known lower bound has never been improved beyond $n$~\cite{Harvey08}. 

We provide a query lower bound of $2n$ for submodular function minimization, a $3n/2-2$ query lower bound for the non-trivial minimizer of a symmetric submodular function, and a $\binom{n}{2}$ query lower bound for the non-trivial minimizer of an asymmetric submodular function. 

Our $3n/2-2$ lower bound results from a connection between SFM lower bounds and a novel concept we term the \emph{cut dimension} of a graph. Interestingly, this yields a $3n/2-2$ cut-query lower bound for finding the global mincut in an undirected, weighted graph, but we also prove it cannot yield a lower bound better than $n+1$ for $s$-$t$ mincut, even in a directed, weighted graph.

\end{abstract}
\newpage
\section{Introduction}
Submodular function minimization (SFM) is a classic algorithmic problem with numerous applications (e.g.~\cite{BoykovVZ01,KohliKT09,KohliT10,LinB11}): given black-box access to a submodular\footnote{$f(\cdot)$ is submodular if $f(X \cup Y) + f(X \cap Y) \leq f(X) + f(Y)$ for all sets $X, Y$. This is equivalent to $f(S \cup T \cup \{i\}) - f(S \cup T) \leq f(S \cup \{i\}) - f(S)$ for all $S,T,i$ (called \emph{diminishing marginal returns}).} function $f:2^{[n]} \rightarrow \mathbb{R}$, find an element of $\arg\min\{f(S)\}$. Due to its ubiquity within TCS and without, the problem has received substantial attention over the past four decades within various communities. Seminal work of Gr{\" o}tschel, Lovasz, and Schrijver first established that a minimizer can be found in poly-time~\cite{GrotschelLS81}, and after a long series of improvements the state-of-the-art now requires $\tilde{O}(n^2)$ value queries to $f(\cdot)$ and $\tilde{O}(n^3)$ additional overhead.

Despite remarkable progress on the algorithmic front, shockingly few \emph{lower bounds} on submodular function minimization are known. It is perhaps unsurprising that computational lower bounds are elusive, but \emph{even query lower bounds are virtually non-existent}. Indeed, state-of-the-art query lower bounds for SFM have remained stagnant at exactly $n$ for the past decade~\cite{Harvey08}. Our main results are new query lower bounds for three variants of SFM. We briefly provide the formal problem statements and our main results below, followed by an overview of context and related work.

\begin{definition}[Query Complexity of Submodular Function Minimization] Given as input black-box access to a submodular function $f(\cdot)$ over $n$ elements, output an element of $\arg\min_S\{f(S)\}$, along with $\min_S\{f(S)\}$. The \emph{query complexity} of SFM is equal to the minimum $q(\cdot)$ such that a deterministic algorithm solves SFM on all instances of $n$ elements with at most $q(n)$ queries. 
\begin{itemize}
\item If $f(\cdot)$ is further assumed \emph{symmetric}, i.e. $f(S) = f([n]\setminus S)$ for all $S$, this is \emph{Symmetric SFM}.
\item If we ask for $\arg\min_{S \notin \{\emptyset, [n]\}}\{f(S)\}$, this is \emph{Non-Trivial SFM} (we will also use both qualifiers).\footnote{Indeed, note that Symmetric SFM (without the Non-Trivial qualifier) is trivial, as $\emptyset$ (or $[n]$) is always a solution.}
\end{itemize}
\end{definition}

As a representative problem to have in mind, imagine a graph on $n$ nodes with positive edge weights and define $f(S)$ to be the weight of all edges leaving set $S$ (the value of cut $S$). Then $f(\cdot)$ is submodular, and non-trivial symmetric SFM would count the number of \emph{cut queries} needed to find the mincut. If you seek the minimum $s$-$t$ cut (and adjust notation so that $f(S)$ is equal to the weight of all edges leaving $S \cup \{s\}$), then this is standard SFM (because it is valid to output $\emptyset$, which implies that the mincut is $s$). If you seek the global mincut in a directed graph, then this is an instance of Non-Trivial SFM (because it is now invalid to output $\emptyset$). If you seek the global mincut in an undirected graph, then this is an instance of Non-Trivial Symmetric SFM (because it doesn't matter which side of the cut is sending versus receiving). Our main results are below.

\begin{theorem}[Main Results]\label{thm:main} The following lower bound the query complexity of SFM:
\begin{itemize}
\item The query complexity of SFM is at least $2n$.
\item The query complexity of Non-Trivial Symmetric SFM is at least $3n/2-2$.
\item The query complexity of Non-Trivial SFM is at least $\binom{n}{2}$.
\end{itemize}

\end{theorem}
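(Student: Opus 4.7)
The plan is to prove each of the three bounds via a tailored adversary argument. In each case, the adversary maintains a family of consistent submodular functions, and as long as two members of the family have different minimizers, the algorithm cannot have terminated; the lower bound then reduces to showing that many queries are needed before the family is forced to collapse to a single candidate answer.

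For the $2n$ bound on general SFM, my plan is to amplify the classical Harvey-style construction, which gives $n$ submodular functions with distinct singleton minimizers that pairwise agree on most sets. I would ``double'' this family by running two orthogonal copies in parallel, so that each value query resolves ambiguity along only one of two axes; combined with a careful check that the perturbed functions remain submodular, this forces the algorithm to make $2n$ queries.

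For the $3n/2 - 2$ bound on Non-Trivial Symmetric SFM, I would introduce the \emph{cut dimension} $\mathrm{cd}(G)$ of a weighted undirected graph $G$: the dimension of the $\mathbb{R}$-span of the $\pm 1$-indicator vectors of the minimum cuts of $G$. The proof then has two parts. First, an adversary argument: answer each query $S$ with the cut value $w(\delta_G(S))$, and show via linear algebra that, as long as the characteristic vectors of queried sets do not span the cut-indicator subspace, a small edge-weight perturbation of $G$ remains consistent with all answers while shifting which cut is uniquely minimum; this yields a lower bound of $\mathrm{cd}(G)$ queries on any algorithm. Second, a graph construction: exhibit an explicit $n$-vertex weighted graph (I anticipate something like a cycle with pendant edges, or a caterpillar-type graph with carefully chosen weights) whose cut dimension is exactly $3n/2 - 2$.

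For the $\binom{n}{2}$ bound on Non-Trivial SFM, I would exploit the full freedom of asymmetry and construct a family of submodular functions indexed by pairs $\{i,j\}\subseteq[n]$, where $f_{\{i,j\}}$ has its unique non-trivial minimizer localized at some pair-specific set (naturally realized as the cut of a directed graph obtained from a fixed baseline by a small perturbation on arcs incident to $i$ and $j$). Since each query value is a linear functional of the underlying edge weights, a careful construction ensures each query rules out at most $O(1)$ candidate pairs, forcing $\binom{n}{2}$ total queries. The principal obstacle is part two: both sides of the cut-dimension framework are delicate, since the adversary direction requires that perturbations preserving all queried cut values yield valid (nonnegative) edge weightings realizing a graph whose unique mincut has moved, and the matching graph construction must realize cut dimension exactly $3n/2 - 2$ --- too few independent mincuts breaks the adversary, too many make an explicit description hard --- so I expect to need a graph whose minimum cuts form a highly structured, tree-like family.
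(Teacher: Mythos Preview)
Your framework for the $3n/2-2$ bound is essentially the paper's: define the cut dimension, argue via a perturbation-style adversary that the cut dimension lower-bounds the query complexity, and then exhibit a graph realizing cut dimension $3n/2-2$. Two small corrections: the paper uses $0/1$ indicator vectors of cut edges (not $\pm 1$), and the actual construction is neither a cycle-with-pendants nor a caterpillar but a ``star plus matching'': a center $v$ joined to $a$ disjoint pairs $\{w_i,w_i'\}$, with an edge inside each pair. Each triple of edges supports three linearly independent mincut vectors, giving dimension $3a = 3(n-1)/2$.

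Your plan for the $2n$ bound has a genuine gap. The vague idea of ``running two orthogonal copies of a Harvey-style family in parallel so that each query resolves only one axis'' does not come with any mechanism that forces a single query to be uninformative about one of the two axes; in a sum-of-copies construction a well-chosen query generically reveals information about both summands simultaneously, and there is no obvious way to prevent that while keeping the function submodular. The paper's construction is quite different and worth internalizing: it fixes a hidden permutation $\sigma$ and a hidden bit-vector $\vec{c}$, takes the chain of ``important'' prefix sets $R_i=\{\sigma(1),\dots,\sigma(i)\}$ as the only candidate minimizers (with value $-c_i\in\{-1,0\}$), and gives every non-prefix set a carefully graded positive value depending on the largest prefix it contains. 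The adversary then shows that (i) one cannot query $n+1$ distinct important sets without first making $n-1$ ``decoy'' queries that pin down $\sigma$, and (ii) until all $n+1$ important sets are queried the adversary can still flip some $c_i$ to change the minimum. The two phases are not two independent copies; they are sequentially entangled via the permutation.

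For the $\binom{n}{2}$ bound your instinct (a family indexed by pairs) is right, but you are overcomplicating it by going through directed cuts and an ``each query rules out $O(1)$ pairs'' counting argument. The paper's construction is purely combinatorial and sharper: it builds a single baseline cost-based function $f$ (with $f(S)=\sum_{i\in S}f(\{i\})-\sum_{T\subseteq S}c(T)$, automatically submodular for nonnegative $c$) and, for each pair $\{i,j\}$, a variant $f_{ij}$ that agrees with $f$ on \emph{every} set except $[n]\setminus\{i,j\}$, yet has a strictly smaller non-trivial minimum. Thus any unqueried size-$(n-2)$ set leaves the adversary free to switch between $f$ and some $f_{ij}$, forcing all $\binom{n}{2}$ such sets to be queried. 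No linear-algebraic perturbation or graph realization is needed.
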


\subsection{New Technique: The Cut Dimension}
Our SFM and Non-Trivial SFM lower bounds are direct constructions, and we defer all related intuition and technical details to the corresponding sections. Our Non-Trivial Symmetric SFM lower bound, however, derives from a new framework based on the \emph{cut dimension} of graphs. 

\begin{definition}[Global Cut Dimension, special case of Definition~\ref{def:dim}] Let $G$ be a directed graph with $m$ edges, and let $S$ be a subset of nodes. Define $\vec{v}^S$ be the vector in $\mathbb{R}^m$ with $v^S_e = 1$ iff the edge $e$ has left endpoint in $S$ and right endpoint not in $S$ (and $v^S_e = 0$ otherwise). Then the \emph{cut dimension} of $G$ is the dimension of $\text{span}(\{\vec{v}^S,\text{ $S$ is a global mincut}\})$. We also consider the following variants:
\begin{itemize}
\item If $G$ is undirected, then $v^S_e = 1$ iff the edge $e$ has one endpoint in $S$ and the other not in $S$.
\item If we seek the min $s$-$t$ cut, then $v^S_e = 1$ iff the edge $e$ has left endpoint in $S \cup \{s\}$ and right endpoint not in $S \cup \{s\}$. We also take the dimension over min $s$-$t$ cuts instead of global mincuts. In this case, we call this the $s$-$t$ Cut Dimension.
\end{itemize}
\end{definition}

Our main result concerning the cut dimension connects it to SFM lower bounds:

\begin{theorem}[Special case of Theorem~\ref{thm:equiv}]\label{thm:specdim} If an undirected graph exists with Global Cut Dimension $d$, then the query complexity of Non-Trivial Symmetric SFM is at least $d$.

If a graph exists with $s$-$t$ Cut Dimension $d$, then the query complexity of SFM is at least $d$. 

If a directed graph exists with Global Cut Dimension $d$, then the query complexity of Non-Trivial SFM is at least $d$.
\end{theorem}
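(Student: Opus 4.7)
The plan is an adversary argument centered on the cut function of the witness graph. Fix a graph $G$ of the appropriate type (undirected, $s$-$t$-marked, or directed) with strictly positive edge weights $w^*$, $m$ edges, cut dimension $d$, and mincuts $S_1,\dots,S_k$ whose indicator vectors $\vec{v}^{S_1},\dots,\vec{v}^{S_k}$ span a subspace $U\subseteq\mathbb{R}^m$ of dimension $d$. The adversary commits to the cut function $f_{w^*}(S)=\langle w^*,\vec{v}^S\rangle$, which is submodular as a positive combination of single-edge cut functions, and answers each query $T$ with $\langle w^*,\vec{v}^T\rangle$. The goal is to show that any deterministic algorithm issuing $q<d$ queries $T_1,\dots,T_q$ and claiming output $(S^*,c^*)$ can be refuted by a perturbation $w^*+\varepsilon\delta$ whose answers on $T_1,\dots,T_q$ coincide with those of $w^*$ but whose true (non-trivial, resp.\ $s$-$t$) minimizer is not $S^*$.

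The heart of the argument is linear-algebraic. Let $V=\mathrm{span}\{\vec{v}^{T_1},\dots,\vec{v}^{T_q}\}$, so $\dim V\le q<d$. Any $\delta\in V^\perp$ preserves the transcript since $f_{w^*+\varepsilon\delta}(T_j)=f_{w^*}(T_j)$ for every $j$, while the value of each original mincut $S_i$ becomes $c+\varepsilon\langle\delta,\vec{v}^{S_i}\rangle$. The projections $\pi_{V^\perp}(\vec{v}^{S_i})$ span $\pi_{V^\perp}(U)$, which has dimension $\dim U-\dim(U\cap V)\ge d-q\ge 1$; in particular they are not all equal, so the convex hull of these projections has at least two extreme points. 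Each extreme point is uniquely minimized by some linear functional in $V^\perp$ (a standard separating-hyperplane argument), producing at least two distinct mincuts $S_{i_1},S_{i_2}$ and vectors $\delta_1,\delta_2\in V^\perp$ such that $S_{i_\ell}$ is the strict minimizer of $\langle\delta_\ell,\vec{v}^{S_j}\rangle$ over mincuts $S_j$. Choosing $\varepsilon>0$ smaller than the gap between $c$ and the next-smallest cut value of $f_{w^*}$ (and small enough to keep $w^*+\varepsilon\delta_\ell>0$), each perturbed weight vector yields a submodular cut function whose unique (non-trivial, resp.\ $s$-$t$) minimizer is $S_{i_\ell}$. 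Since the algorithm's single output $S^*$ can match at most one of $S_{i_1},S_{i_2}$, the adversary commits to the mismatching perturbation and the declared answer is wrong.

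The three claims follow by specializing the graph type: undirected graphs give symmetric cut functions and the Non-Trivial Symmetric SFM bound, directed graphs give the Non-Trivial SFM bound, and a graph with marked $s,t$, encoded as $f(S):=$ the $s$-$t$ cut value of $S\cup\{s\}$ for $S\subseteq[n]\setminus\{s\}$, gives the SFM bound. Non-triviality of the declared minimizer is automatic because, with $w^*>0$, every proper non-empty subset has strictly positive cut value, so $\emptyset$ and $[n]$ are never tied with a real mincut; for the $s$-$t$ variant, $\emptyset$ is a legal output corresponding to the cut $\{s\}$, but this is just an additional cut incorporated into the same adversary construction.

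The main obstacle I anticipate is making the perturbation into a genuinely valid counterexample: the perturbed function must remain submodular and positively weighted, the intended cut must be the \emph{unique} minimizer (not just a tied one), and in the $s$-$t$ variant $s$ and $t$ must still be separated by every candidate mincut. All of these reduce to choosing $\varepsilon$ explicitly in terms of the mincut gap of $f_{w^*}$ and, if necessary, picking $\delta_\ell$ generically within its extremal cone to break accidental ties with other mincuts; this is routine but needs to be spelled out to convert the linear-algebraic existence statement into a bona fide indistinguishability argument.
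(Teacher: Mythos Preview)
Your overall strategy---answer with the cut function of a fixed weighted graph and then perturb the weights within the orthogonal complement of the query vectors---is exactly the paper's perturbation approach, and the dimension count $\dim \pi_{V^\perp}(U)\ge d-q\ge 1$ is correct. However, the step ``in particular they are not all equal'' does not follow: a family of vectors that all equal the same nonzero point $p$ still spans a one-dimensional space. In the critical case $q=d-1$ it is perfectly consistent with your dimension bound that every mincut vector projects to the same $p\in V^\perp$, in which case the convex hull of the projections is a single point and you cannot produce two distinct extreme points $S_{i_1},S_{i_2}$. Your refutation is aimed at the output \emph{set} $S^*$, and that line of attack stalls here.

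The paper sidesteps this by refuting the output \emph{value} instead. It observes that $\dim V^\perp+\dim U>m$, so there is a nonzero $\vec z\in V^\perp\cap U$; because $\vec z$ is a nonzero element of the span of the mincut vectors, it cannot be orthogonal to all of them (equivalently, $AA^T$ has full rank where $A$ stacks the basis mincut vectors). Hence some mincut $S$ has $\vec z\cdot\vec v^{S}\neq 0$, and perturbing by $\pm\varepsilon\vec z$ strictly changes $\min_S f(S)$ while leaving all $q$ answers intact. Your argument is easily repaired the same way: since the projections span a nontrivial subspace, some projection $p=\pi_{V^\perp}(\vec v^{S_i})$ is nonzero, and taking $\delta=-p$ gives $\langle\delta,\vec v^{S_i}\rangle=-\|p\|^2<0$, so the minimum value drops; if the algorithm's declared $c^*$ equals the original minimum it is now wrong, and otherwise the unperturbed $f_{w^*}$ already refutes it. The convex-hull and extreme-point machinery, and the worry about forcing a \emph{unique} minimizing set, are then unnecessary.
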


Interestingly, we also establish that the Cut Dimension is a \emph{equivalent} to the best achievable lower bounds based on graphs via a canonical perturbation approach. Our $3n/2-2$ lower bound for Non-Trivial Symmetric SFM follows immediately from Theorem~\ref{thm:specdim} and the construction of an undirected graph with Global Cut Dimension $3n/2-2$. We also establish that every graph has $s$-$t$ Cut Dimension at most $n+1$, meaning this approach is useful for Non-Trivial SFM but not SFM. 

\subsection{Related Work}\label{sec:related}
The first poly-time (and strongly poly-time) algorithms for SFM were given by~\cite{GrotschelLS81} using the Lovasz extension and the Ellipsoid algorithm~\cite{Khachiyan79}. A substantial series of improvements followed over the subsequent four decades~\cite{Cunningham85,Schrijver00,FleischerI00,IwataFF01,Iwata03,Vygen03,Orlin07,IwataO09} The state-of-the-art is an $\tilde{O}(n^2)$ upper bound on the query complexity of SFM~\cite{LeeSW15}, an $O(n^3)$ upper bound on the query complexity of Non-Trivial Symmetric SFM~\cite{Queyranne98}, and an $\tilde{O}(n^3)$ upper bound on the query complexity of Non-Trivial SFM~\cite{LeeSW15}.\footnote{The final bound follows by a reduction from Non-Trivial SFM to SFM incurring a blowup of $2n$ (for all elements $i$, run SFM only over sets containing $i$ and not containing $i+1$, and then only over sets containing $i$ and not $i-1$).} 

Despite this substantial progress on upper bounds, the only unconditional query lower bound is just $n$ (which is surprisingly non-trivial to establish)~\cite{Harvey08}.~\cite{ChakrabartyLSW16} give a construction which requires $\Omega(n)$ \emph{queries to the Lovasz extension} (if the algorithm can only query the Lovasz extension), but one can find the minimizer in their construction by simply querying all $n$ singletons. 

Our Non-Trivial Symmetric SFM lower bound uses the cut function in graphs. Recent work of~\cite{RubinsteinSW18} establishes that this particular instance of Non-Trivial Symmetric SFM (in unweighted graphs) can be solved by a randomized algorithm in $\tilde{O}(n)$ queries, but our techniques are unrelated.

\subsection{Roadmap}
Section~\ref{sec:main} provides our $2n$ lower bound for SFM, which is a direct construction. Section~\ref{sec:symmetric} proves (a generalization of) Theorem~\ref{thm:specdim} and provides a graph with Global Cut Dimension $3n/2-2$, yielding our $3n/2-2$ lower bound for Non-Trivial Symmetric SFM. Section~\ref{sec:nontriv} provides our $\binom{n}{2}$ lower bound for Non-Trivial SFM, which is also a direct construction.

Appendix~\ref{app:auxiliary} contains auxiliary claims concerning cut queries in graphs (i.e., it is impossible to learn precisely a directed graph using cut queries, what can you learn?) which are not necessary for our lower bounds, but likely useful for future work. Appendix~\ref{app:proofs} contains one omitted proof.

\section{A $2n$ Query Lower Bound for SFM}\label{sec:main}
This section proves our lower bound on SFM.

\begin{theorem} 
\label{thm:2nbound}
The query complexity of SFM is at least $2n$.
\end{theorem}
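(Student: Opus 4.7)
My plan is to use an adversary argument where the key work is to construct a large family of submodular functions that are pairwise hard to distinguish. I fix a ``default'' submodular function $F\colon 2^{[n]}\to\mathbb{R}$ and construct $2n$ ``perturbed'' submodular functions $\{f_\theta\}_{\theta\in\Theta}$ with $|\Theta|=2n$, each agreeing with $F$ everywhere except on a single ``witness'' set $W_\theta\subseteq[n]$, with the $W_\theta$ pairwise distinct, and each having both a minimum value and a minimizer strictly different from those of $F$. The adversary answers every query $S$ with $F(S)$; this response is consistent with $F$ itself (which is submodular) and also with every $f_\theta$ whose witness satisfies $W_\theta\ne S$.

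Given any deterministic algorithm $A$ using at most $q$ queries, each query $S$ can ``refute'' at most one $f_\theta$ (namely the unique $\theta$, if any, with $W_\theta=S$), while $F$ is never refuted. So if $q<2n$, at least one $f_{\theta^\star}$ remains consistent with all responses, along with $F$. Since $F$ and $f_{\theta^\star}$ have different minima and different minimizers, no output $(S^\star,v^\star)$ of $A$ can be correct for both; the adversary then declares the true function to be whichever of $\{F,f_{\theta^\star}\}$ refutes $A$'s answer, contradicting correctness. Hence $q\ge 2n$.

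The main obstacle is constructing the family: we need each $f_\theta$ to be submodular and to have minimum value strictly less than that of $F$, while perturbing $F$ only at the single set $W_\theta$. Dropping $F(W_\theta)$ by $\varepsilon$ is restricted by the submodular ``slack'' $F(X)+F(Y)-F(X\cup Y)-F(X\cap Y)$ over all pairs $X,Y$ with $W_\theta\in\{X,Y,X\cup Y,X\cap Y\}$, while making $W_\theta$ the new global minimizer demands $\varepsilon>F(W_\theta)-\min F$; for most natural choices of $F$ these two constraints conflict. The construction must therefore be tuned so that the witness sets sit at lattice positions where $F$ has ample slack \emph{and} small gap to its minimum. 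The factor of $2$ over the classical $n$ bound of \cite{Harvey08} should come from pairing each element $i\in[n]$ with two distinct ``roles'' (for instance, placing the witness at $\{i\}$ versus at $[n]\setminus\{i\}$, each inducing a different minimizer), thereby producing $2n$ distinct singleton certificate sets.
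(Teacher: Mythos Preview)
Your proposal is a plan rather than a proof: the entire difficulty is constructing the family $\{F,f_\theta\}$, and you do not construct it. More seriously, your concrete suggestion---taking the $2n$ witness sets to be $\{i\}$ and $[n]\setminus\{i\}$ for $i\in[n]$---is provably impossible for any submodular $F$ once $n\ge 3$. Write $g:=F-\min F\ge 0$. If lowering $F$ only at $W$ to a value below $\min F$ preserves submodularity, then for every $Y$ incomparable to $W$ the slack $F(W)+F(Y)-F(W\cup Y)-F(W\cap Y)$ must exceed $F(W)-\min F$, which rearranges to
\[
g(Y)\;>\;g(W\cup Y)+g(W\cap Y).
\]
Apply this with $W=\{i\}$ and any $Y=Z$ satisfying $i\notin Z$, $Z\neq\emptyset$ to get $g(Z)>g(Z\cup\{i\})+g(\emptyset)$. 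Apply it again with $W=[n]\setminus\{i\}$ and $Y=Z\cup\{i\}$ (valid whenever $Z\neq[n]\setminus\{i\}$) to get $g(Z\cup\{i\})>g([n])+g(Z)$. Summing the two yields $0>g(\emptyset)+g([n])\ge 0$, a contradiction. For $n\ge 3$ such a $Z$ always exists (take any singleton $\{j\}$ with $j\neq i$), so no $F$ can simultaneously admit $\{i\}$ and $[n]\setminus\{i\}$ as single-point perturbations to a strictly smaller minimum.

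The paper obtains $2n$ by a different mechanism that your template does not capture. It does \emph{not} fix a default function in advance; instead the adversary adaptively commits to a permutation $\sigma$ and a bit vector $\vec c$, with the potential minimizers forming the chain $R_0\subset R_1\subset\cdots\subset R_n$ determined by $\sigma$. The count $2n=(n-1)+(n+1)$ splits into two phases: $n{-}1$ ``decoy'' queries are needed merely to force the adversary to pin down the chain, and then $n{+}1$ further ``important'' queries along that chain are needed to decide whether the minimum is $0$ or $-1$. Your single-set-perturbation scheme corresponds only to the second phase (and indeed a fixed chain gives at most $n{+}1$ witnesses); the extra $n{-}1$ comes entirely from the adversary hiding $\sigma$ adaptively, which has no analogue in your non-adaptive framework. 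To salvage your approach you would need a base $F$ and $2n$ witness sets satisfying all the slack constraints simultaneously---none is supplied, and the obstruction above shows the natural symmetric choice fails.
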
 

Let us first provide intuition for our construction. We start with an arbitrary permutation $\sigma$ on $[n]$, and define the \emph{important sets} $R_i$ for $0 \le i \le n$ by $R_i := \{\sigma(1), \sigma(2), ..., \sigma(i)\}$ for each $i \in [n] \cup \{0\}$. Observe that there is exactly one important set of each size from $0$ to $n$. These important sets will be the potential minimizers. Intuitively, we will define our function such that: (a) any algorithm must query at least $n-1$ unimportant sets to learn the important sets, and (b) any algorithm must query all $n+1$ important sets to learn the minimizer. In detail:
\begin{itemize}
\item Let $\sigma$ be an arbitrary permutation on $[n]$.
\item Define $R_i:= \{\sigma(1),\ldots, \sigma(i)\}$ for each $i \in [n] \cup \{0\}$.
\item For each $i \in [n]\cup \{0\}$, let $c_i \in \{0,1\}$. 
\item Define the function $f_\sigma^{\vec{c}}(\cdot)$ such that (below, $j(S)$ denotes the maximum $j$ such that $R_j \subseteq S$):
$$f_\sigma^{\vec{c}} (S) =  \begin{cases}
      -c_i & \text{if } S = R_i \text{ for some } 0 \le i \le n \\
      (|S|-j(S)) \cdot (n + 2 - j(S))& \text{else} 
\end{cases}$$
\end{itemize}

That is, $f_\sigma^{\vec{c}}(\cdot)$ is defined to be non-negative on the unimportant sets, and non-positive on the important sets. Intuitively, queries to unimportant sets give information regarding $\sigma$, and queries to important sets give information regarding $\vec{c}$. It is not obvious, but straight-forward to establish that $f_\sigma^{\vec{c}}(\cdot)$ is submodular for all $\sigma, \vec{c}$. The proof of Lemma~\ref{lem:submodSFM} appears in Appendix~\ref{app:proofs}.

\begin{lemma}\label{lem:submodSFM} For all $\sigma, \vec{c}$, $f_\sigma^{\vec{c}}$ is submodular.
\end{lemma}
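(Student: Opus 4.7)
The plan is to verify submodularity via the diminishing marginal returns condition: for every $S \subsetneq T \subseteq [n]$ and $i \notin T$,
\[
f_\sigma^{\vec{c}}(S \cup \{i\}) - f_\sigma^{\vec{c}}(S) \;\ge\; f_\sigma^{\vec{c}}(T \cup \{i\}) - f_\sigma^{\vec{c}}(T).
\]
To isolate the small perturbations $\vec{c}$ from the bulk of $f$, I would introduce the auxiliary function $g(S) := (|S|-j(S))(n+2-j(S))$, defined on every subset (using the same convention for $j(\cdot)$). A quick check gives $g(R_i) = 0$, so $f$ agrees with $g$ off the chain $\mathcal{C} := \{R_0,\ldots,R_n\}$, and the difference $\delta := f - g$ takes values in $\{-1,0\}$ on $\mathcal{C}$ and is zero elsewhere. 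The other structural ingredient is that $j(\cdot)$ is monotone under inclusion, and adding a single element $i$ to a set $S$ either leaves $j$ fixed (precisely when $i \ne \sigma(j(S)+1)$) or jumps $j$ to the largest $k'$ for which $\sigma(j(S)+1),\ldots,\sigma(k')$ all lie in $S \cup \{i\}$.

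First I would prove $g$ is submodular by a four-way case analysis on whether $j$ jumps when $i$ is added to $S$ and, separately, when added to $T$; each case produces a closed-form expression for the marginal, and the desired inequality reduces to the monotonicity $j(S) \le j(T)$ together with elementary algebra. Note that $g$ is integer-valued, so any strict $g$-inequality has slack at least $1$.

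Next I would transfer submodularity from $g$ to $f$. The $f$-inequality equals the $g$-inequality plus
\[
\Delta := [\delta(T \cup \{i\}) - \delta(T)] - [\delta(S \cup \{i\}) - \delta(S)],
\]
where each bracketed term lies in $\{-1,0,1\}$. A priori $\Delta \in \{-2,\ldots,2\}$, but $\Delta = 2$ would demand $\delta(T) = -1$ and $\delta(S \cup \{i\}) = -1$, i.e., $T \in \mathcal{C}$ and $S \cup \{i\} \in \mathcal{C}$ with both associated $c$-values equal to $1$; an elementary check using $S \subsetneq T$ and $i \notin T$ shows the resulting chain/containment constraints force $S = T$, a contradiction. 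Hence $\Delta \le 1$, and the task reduces to showing the $g$-slack $g(S \cup \{i\}) - g(S) - g(T \cup \{i\}) + g(T) \ge 1$ whenever $\Delta = 1$.

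The main obstacle is this final verification. A $\Delta = 1$ configuration arises in one of two dual patterns: either $T \in \mathcal{C}$ with $c_{|T|} = 1$ and $\delta(T \cup \{i\}) = 0$, or $S \cup \{i\} \in \mathcal{C}$ with $c_{|S \cup \{i\}|} = 1$ and $\delta(S) = 0$. In each pattern, the conditions $S \subsetneq T$ and $i \notin T$ pin down $j(S), j(T), j(S \cup \{i\}), j(T \cup \{i\})$ fairly rigidly — for instance, in the second pattern with $S = R_{b-1}$ and $i = \sigma(b)$, one forces $j(T) = b-1$ (since $\sigma(b) = i \notin T$) and $j$ must jump when $i$ is added to $T$. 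A short calculation (writing $\beta := j(T \cup \{i\}) - b \ge 0$ and $u := |T|-b+1 \ge 1$) then yields $g$-slack $\ge 1$, with analogous rigidity and explicit computations handling the few remaining sub-cases (including the sub-case where $S$ itself is unimportant, which uses the permutation structure to show $j(T) = j(S)$, and the mirror-image patterns). Enumerating these finitely many compatible configurations closes the proof.
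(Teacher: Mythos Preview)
Your proposal is correct and takes a genuinely different route from the paper. The paper verifies the union--intersection inequality $f(X)+f(Y)\ge f(X\cup Y)+f(X\cap Y)$ directly, splitting into cases according to whether $X$ and $Y$ are important sets, and using the auxiliary bound $f_\sigma^{\vec c}(S)\le(|S|-i)(n+2-i)$ whenever $R_i\subseteq S$. You instead use diminishing marginal returns, and crucially decompose $f=g+\delta$ into a bulk part $g(S)=(|S|-j(S))(n+2-j(S))$ and a $\{-1,0\}$-valued perturbation $\delta$ supported on the chain. This decomposition cleanly isolates why the hypothesis $c_i\in\{0,1\}$ matters (it is exactly what forces $|\Delta|\le 1$), and reduces the problem to showing (i) $g$ is submodular and (ii) the integer $g$-slack is at least $1$ in the finitely many $\Delta=1$ configurations. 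Both approaches involve comparable amounts of algebra; the paper's is more self-contained, while yours makes the structure more transparent and yields the auxiliary fact that $g$ itself is submodular.

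Two small corrections to your sketch. First, your ``four-way'' case split is really three-way: if $j(S)<j(T)$ then $\sigma(j(S)+1)\in T$, so $i\neq\sigma(j(S)+1)$ and there is never a jump at $S$ without a jump at $T$. Second, your justification that $\Delta=2$ is impossible is right in conclusion but not in detail: from $T=R_b$, $S\cup\{i\}=R_d$, $S\subsetneq T$, and $i\notin T$, one gets $|S|<b$ hence $d=|S|+1\le b$, so $i\in R_d\subseteq R_b=T$, a contradiction --- the argument does not ``force $S=T$.'' Neither issue affects the validity of the overall plan.
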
 

We now provide a complete proof that deterministic algorithms must make $2n$ queries for functions of the form $f_\sigma^{\vec{c}}(\cdot)$. We define an adversary which adaptively sets $\sigma, \vec{c}$ as queries are made:

\begin{itemize}
\item Initialize $\sigma, \vec{c}$ to be undefined.
\item Let $i$ denote the maximum $j$ such that $\sigma(j)$ is defined (so initially $i=0$). 
\item When a new query, $S$, is made:
\begin{enumerate}
\item If $S= R_j$, for some $j \leq i$, answer $0$ and set $c_j = 0$. Call this an \emph{important query}.
\item If $R_i \not\subseteq S$, $j(S)$ is defined. Answer $(|S|-j(S))\cdot (n+2-j(S))$. Call this a \emph{useless query}. 
\item If $R_i \subset S$, $j(S)$ is not yet defined. Pick any $j \notin S$ (such a $j$ must exist as $S \neq R_n$) and set $\sigma(i+1) = j$.\footnote{Observe also that $j \neq \sigma(\ell)$ for any $\ell \leq i$, as $R_i \subset S$.} Now $j(S):=i$, so answer $(|S|-i)\cdot (n+2-i)$. Call this a \emph{decoy query}.
\end{enumerate}
\item If the algorithm terminates after $n+1$ (distinct) important queries, $\sigma$ and $\vec{c}$ are fully defined.
\item If the algorithm has made fewer than $n+1$ (distinct) important queries, let $x$ denote the algorithm's guess for the minimum value.
\begin{enumerate}
\item If $x = 0$, set all undefined $c_i:=1$ and complete $\sigma$ arbitrarily (if necessary).
\item If $x = -1$, set all undefined $c_i:=0$ and complete $\sigma$ arbitrarily (if necessary).
\item If $x \notin \{0,-1\}$, complete $\vec{c},\sigma$ arbitrarily.
\end{enumerate}
\end{itemize}

Theorem~\ref{thm:2nbound} will follow by proving that the above adversary is consistent and that the adversary has the power to make multiple (distinct) minima unless the algorithm has made at least $n-1$ decoy queries and $n+1$ important queries.

\begin{observation}\label{obs:valid} The adversary answers all queries in a way that is consistent with some $f_\sigma^{\vec{c}}(\cdot)$.
\end{observation}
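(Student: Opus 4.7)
The plan is a case analysis by query type, verifying that each response the adversary returns equals $f_\sigma^{\vec{c}}(S)$ under the eventual completion of $\sigma$ and $\vec{c}$. The key invariant driving the argument is that the chain $R_0\subsetneq R_1\subsetneq \cdots$ grows monotonically as the adversary defines new values of $\sigma$: once $R_i\not\subseteq S$ at some moment for a queried $S$, this non-containment is permanent, since $R_k\supseteq R_i$ for all future $k\ge i$.

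For an \emph{important query} $S=R_j$ with $j\le i$, the adversary sets $c_j:=0$ and answers $0$; because the same important set $R_j$ is only ever assigned a $c_j$ value once, the final function satisfies $f_\sigma^{\vec{c}}(R_j)=-c_j=0$ as required. For a \emph{useless query}, $j(S)$ is already fully determined by the defined prefix $\sigma(1),\ldots,\sigma(i)$, and the monotonicity above ensures that $j(S)$ cannot increase and that no future $R_k$ can equal $S$; combined with the fact that Case~1 already rules out $S\in\{R_0,\ldots,R_i\}$, this places $S$ in the ``else'' branch of $f_\sigma^{\vec{c}}$ with value exactly $(|S|-j(S))(n+2-j(S))$. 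For a \emph{decoy query} the adversary chooses $\sigma(i+1)\in [n]\setminus S$, which instantly forces $R_{i+1}\not\subseteq S$, reducing to the useless-query analysis with $j(S)=i$.

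At termination, the adversary completes $\sigma$ and $\vec{c}$. Any undefined $c_j$ corresponds to an important set $R_j$ that was never queried, so its value does not contradict any past answer. The extension of $\sigma$ only introduces new sets $R_k$ with indices exceeding the final count $i^*$, but for every previously queried $S'$ we already exhibited an index $k^*\le i^*$ with $R_{k^*}\not\subseteq S'$; hence by monotonicity no newly introduced $R_k$ equals any past $S'$ and no past $j(S')$ changes.

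The main obstacle is simply the bookkeeping: one must check carefully that the value $j(S)$ computed at the moment of a query continues to equal the final value of $j(S)$ under $f_\sigma^{\vec{c}}$, and that $S$ does not accidentally coincide with an important set created by a later extension of $\sigma$. Both points hinge entirely on the monotone chain structure of $(R_k)_k$ together with the adversary's deliberate choice $j\notin S$ in the decoy case, so once that structural observation is isolated, the remainder of the verification is a direct check against the two-branch definition of $f_\sigma^{\vec{c}}$.
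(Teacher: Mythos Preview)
Your proof is correct and follows essentially the same approach as the paper's: a case analysis over important, useless, and decoy queries, followed by checking that the completion step is consistent. The paper's proof is a terse sketch of exactly these points (hinging on the observation that once $\sigma(i+1)\notin S$ the value of $f_\sigma^{\vec{c}}(S)$ is fixed), whereas you make the monotone-chain invariant and the permanence of $j(S)$ explicit.
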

\begin{proof}
Observe that the adversary answers all queries to $R_j$ with $0$, so this is always consistent. Further observe that whenever an unimportant set is queried, either the answer is already determined by $\sigma$ (and therefore consistent), or one new output of $\sigma$ is fixed so that the answer is now determined by $\sigma$ (and therefore consistent now and forever). The precise definition of $f_\sigma^{\vec{c}}(\cdot)$ is important for the final claim: as soon as we know that $\sigma(i+1) \notin S$, this fixes the value of $f_\sigma^{\vec{c}}(\cdot)$. 

Finally, observe that the completion step is also consistent with all previous queries, as they are completely defined by the partial definition of $\sigma, \vec{c}$.
\end{proof}

\begin{lemma}\label{lem:decoy}
Algorithms cannnot make $n+1$ distinct important queries without $n-1$ decoy queries.
\end{lemma}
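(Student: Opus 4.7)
My plan is to count how much of $\sigma$ has been committed as a function of the number of queries of each type. By direct inspection of the adversary's three cases, only decoy queries alter the partial definition of $\sigma$, and each decoy defines exactly one new value $\sigma(i+1)$; important and useless queries leave $\sigma$ unchanged. Hence after $d$ decoy queries have been processed, precisely $\sigma(1),\dots,\sigma(d)$ are defined, and the current value of $i$ is exactly $d$.

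Next I would observe that a query $S$ can only be answered as an important query when $S$ coincides with an \emph{already determined} set $R_j$. The extremal sets $R_0 = \emptyset$ and $R_n = [n]$ are determined a priori (the latter because $\sigma$ is a permutation of $[n]$), but for each $1 \le j \le n-1$ the set $R_j = \{\sigma(1),\dots,\sigma(j)\}$ is a specific subset of $[n]$ only once $\sigma(1),\dots,\sigma(j)$ have been committed. Combined with the first observation, any important query to such an $R_j$ must be preceded by at least $j$ decoy queries. Since there is exactly one important set of each size in $\{0,1,\dots,n\}$, $n+1$ distinct important queries force the algorithm to query $R_{n-1}$ as an important query at some point, which in turn forces at least $n-1$ decoys to have occurred beforehand, giving the bound.

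The one point I would want to justify carefully is that a query $S$ cannot be ``retroactively'' promoted to important by the adversary's later choices of $\sigma$. This is handled directly by the rule for case 3: when a decoy is triggered by $S$, the adversary sets $\sigma(i+1) \notin S$, so $R_{i+1} \not\subseteq S$, and consequently $S$ cannot equal any $R_j$ with $j > i$ once $\sigma$ is further extended. Thus a query answered by case 2 or case 3 is permanently ``not important,'' so every important query really is accounted for by some $R_j$ that was determined at the time it was asked, and the counting above is rigorous. I do not expect a significant obstacle beyond this bookkeeping step; the lemma is really just a careful reading of the adversary's state transitions.
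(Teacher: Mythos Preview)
Your argument is correct and is essentially the paper's proof spelled out in more detail: both observe that only decoy queries advance $i$, and that at state $i$ the only sets the adversary can classify as important are $R_0=\emptyset,\ R_1,\dots,R_i$, and $R_n=[n]$, giving at most $i+2$ distinct important queries. Your extra paragraph ruling out ``retroactive'' promotion of a decoy/useless query to important status is a nice explicit justification of a point the paper leaves implicit, but the underlying approach is the same.
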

\begin{proof}
Observe that each decoy query increases $i$ by one. Observe that the only distinct important queries that can be made are $\emptyset, [n]$, and $R_1,\ldots, R_i$, for a total of $i+2$. If $i < n-1$, then the distinct possible important queries are also $< n+1$.
\end{proof}

\begin{lemma}\label{lem:important}
Any algorithm making $<n+1$ distinct important queries is wrong.
\end{lemma}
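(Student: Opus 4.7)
The plan is to run a case analysis on the algorithm's declared minimum value $x$, after first pinning down the range of values that $f_\sigma^{\vec{c}}(\cdot)$ can take. I would start by observing that on any important set $R_i$ the value is $-c_i \in \{0,-1\}$, while on any unimportant set $S$ we have $R_{j(S)} \subseteq S$ but $S \neq R_{j(S)}$, so $|S|-j(S) \geq 1$ and $n+2-j(S) \geq 2$; hence $f_\sigma^{\vec{c}}(S) \geq 2 > 0$. Thus, no matter how $\vec{c}$ is ultimately completed, the true minimum of $f_\sigma^{\vec{c}}$ lies in $\{0,-1\}$, equal to $-1$ iff some $c_i = 1$.

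Next, I would suppose the algorithm terminates after strictly fewer than $n+1$ distinct important queries. Since, after completion, there are exactly $n+1$ important sets $R_0,\dots,R_n$, and no query classified as useless or decoy can retroactively coincide with some $R_k$ (at each decoy step the adversary chooses $\sigma(i+1) \notin S$, so every $R_k$ with $k \geq i+1$ contains an element outside $S$ and hence cannot equal $S$), at least one $R_k$ is unqueried, so its $c_k$ is still undefined when the completion step fires. The three completion branches are handled as follows: if $x=0$, the adversary sets every undefined $c_i := 1$, so in particular $f(R_k) = -1$, contradicting $x = 0$; if $x=-1$, the adversary sets every undefined $c_i := 0$, whence every important set attains $0$ and every unimportant set attains a positive value, so the true minimum is $0 \neq -1$; and if $x \notin \{0,-1\}$, the first paragraph already rules this out since the true minimum must lie in $\{0,-1\}$.

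Finally, I would invoke Observation~\ref{obs:valid} to confirm that the completed $f_\sigma^{\vec{c}}(\cdot)$ is consistent with every answer the adversary returned, and Lemma~\ref{lem:submodSFM} to confirm it is submodular, so that the adversary has genuinely produced a valid SFM instance on which the algorithm's output is incorrect. I do not expect a real obstacle here: the substantive content is the positivity of $f_\sigma^{\vec{c}}$ on unimportant sets (which forces the true minimum into $\{0,-1\}$) together with the observation that the adversary's freedom over a single leftover $c_k$ suffices to swing the true minimum between $0$ and $-1$, refuting any possible numerical guess.
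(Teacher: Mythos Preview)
Your proposal is correct and follows the same approach as the paper: a case split on the algorithm's declared minimum $x$ against the adversary's completion rule. The paper's proof is a three-line version of your argument, simply asserting that the completion step forces the minimum to be $-1$ when $x=0$, forces it to be $0$ when $x=-1$, and that any other $x$ is clearly wrong; you supply the supporting details (positivity on unimportant sets, existence of an undefined $c_k$, non-coincidence of decoy/useless queries with important sets) that the paper leaves implicit.
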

\begin{proof}
If the guess is $\notin\{0,-1\}$, then the guess is clearly wrong. If the guess is $0$, then the completion step makes it so that the minimum is $-1$, so the guess is wrong. If the guess is $-1$, then the completion step makes it so that the minimum is $0$, so the guess is wrong.
\end{proof}

\begin{proof}[Proof of Theorem~\ref{thm:2nbound}] Lemmas~\ref{lem:important} and~\ref{lem:decoy} together assert that the algorithm must make $n-1$ decoy queries and $n+1$ important queries in order to correctly solve SFM on instances of the form $f_\sigma^{\vec{c}}(\cdot)$ against the prescribed adversary. Therefore, a total of $2n$ queries must be made.
\end{proof}

We conclude this section by noting that our construction witnesses a lower bound of exactly $2n$ (and no better).

\begin{proposition}An SFM algorithm exists making $2n$ queries for any function of the form $f_\sigma^{\vec{c}}(\cdot)$.
\end{proposition}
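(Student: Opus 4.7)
The algorithm must learn both the permutation $\sigma$ (to identify the candidate minimizers $R_0,\dots,R_n$) and the vector $\vec{c}$ (to distinguish the value $-1$ from $0$ on each important set). Since every $c_i$ is information private to $R_i$, querying each of the $n+1$ important sets is clearly necessary; this alone gives a trivial lower bound of $n+1$ queries, so to match $2n$ the algorithm must squeeze out $\sigma$ in just $n-1$ additional unimportant queries.

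My plan is as follows. First, the algorithm queries $R_n = [n]$, which returns $-c_n$. Next, it queries $[n]\setminus\{i\}$ for every $i\in [n]$, for a total of $n$ queries. Exactly one of these---namely the one with $i=\sigma(n)$---is important: it equals $R_{n-1}$ and returns $-c_{n-1}\in\{-1,0\}$. For every other $i$, the queried set is unimportant with $j([n]\setminus\{i\})=\sigma^{-1}(i)-1$, so the returned value is $(n-\sigma^{-1}(i))(n+3-\sigma^{-1}(i))$. This expression is a strictly decreasing function of $\sigma^{-1}(i)$ on $\{1,\dots,n-1\}$, and its values are bounded below by $1\cdot 4=4$. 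Consequently the algorithm can unambiguously identify the important query (its answer lies in $\{-1,0\}$, while every other answer is at least $4$) and also decode $\sigma^{-1}(i)$ from each unimportant answer. The identified important query simultaneously pins down $\sigma(n)$, completing the reconstruction of $\sigma$ after $n+1$ queries, while also yielding $c_{n-1}$ and $c_n$.

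Finally, armed with $\sigma$, the algorithm queries $R_0,R_1,\dots,R_{n-2}$ directly ($n-1$ queries) to recover $c_0,\dots,c_{n-2}$. The total query count is $1+n+(n-1)=2n$, and all queries are pairwise distinct: their cardinalities are $n$, $n-1$ (for $n$ different subsets of size $n-1$), and $0,1,\dots,n-2$. At this point the full function $f_\sigma^{\vec{c}}(\cdot)$ is known, and the algorithm outputs a minimizer (either some $R_i$ with $c_i=1$, of value $-1$, or $R_0=\emptyset$ of value $0$ if $\vec{c}=\vec{0}$) together with its value.

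The only thing that requires careful verification is the injectivity and separation of $(n-k)(n+3-k)$ on $k\in\{1,\dots,n-1\}$, which guarantees unambiguous decoding of $\sigma^{-1}(i)$ and clean separation from the $\{-1,0\}$ range of important answers; this is a routine calculation. No other step presents a real obstacle.
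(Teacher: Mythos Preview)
Your proof is correct and takes essentially the same approach as the paper: query the sets $[n]\setminus\{i\}$ to decode $\sigma$ via the injective expression $(n-\sigma^{-1}(i))(n+3-\sigma^{-1}(i))$, then query the remaining important sets. Your organization differs only trivially---you query $[n]$ first and then all $n$ sets of size $n-1$, whereas the paper queries $n-1$ sets of size $n-1$ (skipping $i=1$) and then all $n+1$ important sets---and your version has the minor advantage of making the distinctness of the $2n$ queries explicit.
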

\begin{proof}
First, query the $n-1$ sets $[n]\setminus \{i\}$ for all $i \neq 1$. Observe that $f_\sigma^{\vec{c}}([n]\setminus \{i\}) \leq 0$ if and only if $\sigma(n) = i$. Similarly, as $[n]\setminus \{i\}$ is missing only a single element ($i$), this means that if $\sigma(n) \neq i$ then $f_\sigma^{\vec{c}}([n]\setminus \{i\}) = (n - \sigma^{-1}(i))\cdot (n+3 - \sigma^{-1}(i))$. So the query to $[n]\setminus \{i\}$ reveals $\sigma^{-1}(i)$, for any $i$. Therefore, these $n-1$ queries completely reveal $\sigma$ (because $\sigma$ is a permutation).

After $\sigma$ is fully revealed, simply query the $n+1$ important sets to find the minimizer.
\end{proof}

\section{A $3n/2-2$ Query Lower Bound for Non-Trivial Symmetric SFM}\label{sec:symmetric}
We begin this section by providing a generalization the cut dimension, first by providing a class of submodular functions which generalize mincuts in graphs.

\subsection{Defining the Generalized Cut Dimension}
 Consider a ground set of $n$ elements, and a disjoint set of $m$ \emph{hyperedges}. We associate with each $S \subseteq [n]$ (including $S = \emptyset$) a set $h(S) \subseteq [m]$ of hyperedges that are active for $S$. For example, to capture mincuts in an undirected graph we might have the hyperedges simply be the edges of that graph, and $h(S)$ would denote the edges with one endpoint in $S$ and the other not in $S$. 

To each $i \in [m]$, associate a non-negative weight $w_i$, and define the function $f(\cdot)$ so that $f(S):= \sum_{i \in h(S)} w_i$. If the active sets $h(\cdot)$ satisfy the following inequality, then it is easy to see that $f(\cdot)$ is submodular (below, $X \overline{\cup}Y$ denotes the multiset union of $X$ and $Y$, which contains two copies of every element in $X \cap Y$):
\begin{equation*}
h(S \cap T) \overline{\cup} h(S \cup T) \subseteq h(S) \overline{\cup} h(T).
\end{equation*}

We call such functions \emph{weight-based}. It is easy to see that cuts in graphs or hypergraphs are weight-based. For such functions, there is a meaningful notion of ``dimension" associated with the set of minimizers. For every $S \subseteq [n]$, define the vector $\vec{v}^S \in \mathbb{R}^m$ so that $v^S_i=1$ if and only if $i \in h(S)$ \emph{and $w_i > 0$}, and $v^S_i = 0$ otherwise. For a set $\mathcal{S}$ of subsets of $[n]$, let $\text{dim}(\mathcal{S})$ denote the dimension of the span (over $\mathbb{R}^m$) of the vectors $\{\vec{v}^S\}_{S \in \mathcal{S}}$. We now define the Generalized Cut Dimension:

\begin{definition}[Generalized Cut Dimension]\label{def:dim} Let $f(\cdot)$ be weight-based. Then the \emph{Generalized Cut Dimension} of $f(\cdot)$ is equal to $\text{dim}(\arg\min_S\{f(S)\})$. The \emph{Generalized Non-Trivial Cut Dimension} of $f(\cdot)$ is $\text{dim}(\arg\min_{S \notin \{\emptyset,[n]\}}\{f(S)\}\})$.
\end{definition}

We will call a weight-based function symmetric if $h(S) = h([n]\setminus S)$ --- it is clear that the resulting submodular function is symmetric.

\subsection{Connecting Generalized Cut Dimension to Query Complexity}
In this section, we establish the \emph{equivalence} of Generalized Cut Dimension to a canonical ``perturbation'' approach for lower bounding the query complexity. 

\begin{definition}[Perturbation Bound] Starting from a (symmetric, if desired) weight-based submodular function $f(\cdot)$ with weights $\vec{w}$ and (non-trivial, if desired) minimizers $\mathcal{M}_f$, pick a sufficiently small $\varepsilon>0$ so that every $\vec{w}'$ with $w'_i \in [(1-\varepsilon)w_i,(1+\varepsilon)w_i]$ induces a (symmetric, if desired) weight-based submodular function $g(\cdot)$ with (non-trivial, if desired) minimizers $\mathcal{M}_g \subseteq \mathcal{M}_f$. Let $\mathcal{G}(f)$ denote the set of all (symmetric, if desired) weight-based functions with $w'_i \in [(1-\varepsilon)w_i,(1+\varepsilon)w_i]$. 

If it is the case that, for any set of $q-1$ queries to $f$, there exists a $g \in \mathcal{G}(f)$ consistent with those queries such that $\min_S\{g(S)\} \neq \min_S\{f(S)\}$, we say that $f$ witnesses a (Symmetric) \emph{Perturbation Bound} of $q$ (if there is a $g(\cdot) \in \mathcal{G}(f)$ with $\min_{S \notin \{\emptyset, [n]\}}\{g(S)\} \neq \min_{S \notin \{\emptyset, [n]\}}\{f(S)\}$, we say that $f$ witnesses a Non-Trivial Perturbation Bound of $q$).

We refer to \emph{the} Perturbation Bound of $f$ as the maximum possible $q$ such that $f$ witnesses a Peturbation Bound of $q$ (and similarly can define \emph{the} Non-Trivial Perturbation Bound). 
\end{definition}

Intuitively, the perturbation bound captures the following natural way to obtain query lower bounds: start from some function $f(\cdot)$ with minimizers $\mathcal{M}_f$. There is some non-zero gap $\delta$ between the minimizers and the rest, so there exists a sufficiently small $\varepsilon$ such that perturbing weights by $\varepsilon$ can tie-break among minimizers, but not yield a new minimizer. 

\begin{observation}If there exists an $f(\cdot)$ witnessing a (Symmetric, Non-Trivial) Perturbation Bound of $q$, then the query complexity of (Symmetric, Non-Trivial) SFM is at least $q$.
\end{observation}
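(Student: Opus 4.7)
The plan is a standard indistinguishability/adversary argument, where the Perturbation Bound is literally the gadget needed to drive a contradiction. Fix an $f$ witnessing a (Symmetric, Non-Trivial) Perturbation Bound of $q$ and suppose for contradiction that some deterministic algorithm $A$ correctly solves (Symmetric, Non-Trivial) SFM on every valid input using at most $q-1$ queries.

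First I would simulate $A$ on the oracle defined by $f$. Let $Q_1, Q_2, \ldots, Q_{q-1}$ be the adaptive sequence of queries $A$ makes (padded with arbitrary extra queries if $A$ halts sooner), and let $(S^\star, v^\star)$ be its output. Correctness on $f$ forces $v^\star$ to equal the relevant minimum value of $f$ (the unrestricted minimum, or the minimum over $S \notin \{\emptyset,[n]\}$, depending on the variant). Next I would invoke the hypothesis on this specific query set: by definition of the Perturbation Bound, there exists $g \in \mathcal{G}(f)$ that agrees with $f$ on every $Q_i$ and whose (non-trivial, if desired) minimum value differs from that of $f$. Since $\mathcal{G}(f)$ consists of (symmetric, if desired) weight-based submodular functions, $g$ is itself a legal input to the problem $A$ purports to solve.

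Finally, I would re-simulate $A$ on the oracle for $g$. Because $A$ is deterministic and each query receives the same answer as under $f$, an easy induction on the query index shows that $A$ follows the identical execution and produces the identical output $(S^\star, v^\star)$. But $v^\star$ is no longer the correct (non-trivial) minimum value for $g$, so $A$ errs on $g$, contradicting the assumption that $A$ is correct on all instances. The Symmetric and Non-Trivial variants are handled uniformly, since the corresponding restrictions (that perturbations preserve symmetry, or that minima be taken over $S \notin \{\emptyset,[n]\}$) are already baked into $\mathcal{G}(f)$ and into the definition of the Perturbation Bound; the argument above never leaves those classes.

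There is no genuine technical obstacle here: the real content lies in the \emph{definition} of the Perturbation Bound, which is carefully engineered so that the \emph{minimum value} (not merely the identity of a minimizer) changes across perturbations. That is exactly what makes the indistinguishability argument bite, since SFM requires outputting the minimum value along with a minimizer. If one only demanded outputting a minimizer, one would instead need a perturbation that eliminated all previously-tied minimizers except one, which is a stronger requirement; the present definition sidesteps this by perturbing weights so that the numerical minimum shifts.
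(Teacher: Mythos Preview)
Your proposal is correct and follows essentially the same indistinguishability argument as the paper: run the algorithm on $f$, then invoke the Perturbation Bound to produce a $g \in \mathcal{G}(f)$ agreeing on all $q-1$ answers but with a different minimum value, forcing the deterministic algorithm to err on one of the two. The paper phrases it as a two-case adversary (either $x \neq \min_S f(S)$ or $x = \min_S f(S)$), whereas you fold the first case into the contradiction assumption, but this is a cosmetic difference.
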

\begin{proof}
Assume for contradiction that an algorithm correctly outputs the minimum value, $x$, after $q-1$ queries that are consistent with $f(\cdot)$. If $x \neq \min_S \{f(S)\}$, then all queries are consistent with $f(\cdot)$, so the algorithm could be wrong because the function is $f(\cdot)$. if $x = \min_S\{f(S)\}$, then because $f(\cdot)$ witnesses a Perturbation Bound of $q$, there exists a $g(\cdot)$ consistent with all $q-1$ queries with $\min_S \{g(S)\} \neq x$, so the algorithm could be wrong because the function is $g(\cdot)$.

The same proof holds verbatim if $f(\cdot)$ is assumed to be symmetric, or if we replace absolute minimizers with non-trivial minimizers.
\end{proof}

We now establish that the perturbation bound approach yields exactly the same lower bound as the generalized cut dimension. Our proof will make use of the following observation.

\begin{observation}\label{obs:dotprod} For any $g \in \mathcal{G}(f)$, $g(S)=\vec{v}^S \cdot \vec{w}'$ (where $\vec{w}'$ is the weight vector defining $g(\cdot)$). 
\end{observation}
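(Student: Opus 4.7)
The plan is to unwind the two definitions in sight and reconcile them by observing what happens at the ``zero-weight'' coordinates. On the one side, since $g(\cdot)$ is weight-based with weight vector $\vec{w}'$, we have by definition
\[
g(S) = \sum_{i \in h(S)} w'_i.
\]
On the other side, since $v^S_i = 1$ exactly when $i \in h(S)$ and $w_i > 0$ (and $v^S_i = 0$ otherwise),
\[
\vec{v}^S \cdot \vec{w}' = \sum_{i \in h(S),\, w_i > 0} w'_i.
\]
So the whole content of the observation is that the indices $i \in h(S)$ with $w_i = 0$ contribute nothing to $g(S)$.

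This is where the definition of $\mathcal{G}(f)$ comes in: by assumption every $g \in \mathcal{G}(f)$ has $w'_i \in [(1-\varepsilon)w_i, (1+\varepsilon)w_i]$. For any index $i$ with $w_i = 0$, this interval collapses to $\{0\}$, forcing $w'_i = 0$. Consequently the terms indexed by $\{i \in h(S) : w_i = 0\}$ vanish from the sum $\sum_{i \in h(S)} w'_i$, and restricting to $\{i \in h(S) : w_i > 0\}$ does not change the value.

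Chaining the two displayed equalities gives $g(S) = \vec{v}^S \cdot \vec{w}'$, as claimed. There is no real obstacle; the only subtlety worth calling out is the handling of zero-weight coordinates, since $\vec{v}^S$ is defined relative to the original weights $\vec{w}$ (not the perturbed $\vec{w}'$), and one must argue that this discrepancy is harmless precisely because perturbations leave the zero entries fixed.
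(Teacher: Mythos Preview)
Your proof is correct and follows essentially the same approach as the paper: both arguments reduce to the observation that $w_i = 0 \Rightarrow w'_i = 0$ for any $g \in \mathcal{G}(f)$, so the zero-weight coordinates excluded by $\vec{v}^S$ contribute nothing to the sum $\sum_{i \in h(S)} w'_i$.
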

\begin{proof}
First, recall that $g(S):= \sum_{i \in h(S)} w'_i$. Recall further that $v_i^S = 1$ whenever $w_i \neq 0$ and $i \in h(S)$. Importantly, note that $w_i=0 \Leftrightarrow w'_i = 0$ for all $g(\cdot) \in \mathcal{G}(f)$, so in fact $v_i^S = 1$ whenever $w'_i \neq 0$ and $i \in h(S)$ (and $0$ otherwise). This immediately implies that $\vec{v}^S \cdot \vec{w}' = \sum_{i \in h(S)} w'_i = g(S)$.
\end{proof}

\begin{theorem}\label{thm:equiv} Let $f(\cdot)$ be (symmetric) weight-based. Then the (Symmetric, Non-Trivial) Perturbation Bound of $f(\cdot)$ is exactly equal to the Generalized (Non-Trivial) Cut Dimension of $f(\cdot)$.
\end{theorem}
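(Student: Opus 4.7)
The approach is to reduce everything to a linear-algebraic statement via Observation~\ref{obs:dotprod}, then verify both inequalities. Let $d$ denote the Generalized (Non-Trivial, if applicable) Cut Dimension of $f(\cdot)$, let $\vec{w}$ be its weight vector, let $\mu := \min_S f(S)$ (or the non-trivial minimum, as appropriate), and let $\mathcal{M}_f$ be the corresponding set of minimizers. For any $g \in \mathcal{G}(f)$ with weight vector $\vec{w}' = \vec{w} + \vec{u}$, Observation~\ref{obs:dotprod} gives $g(S) - f(S) = \vec{v}^S \cdot \vec{u}$, so agreeing with $f$ on a query $S$ is precisely the linear constraint $\vec{v}^S \cdot \vec{u} = 0$. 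Because the definition of $\mathcal{G}(f)$ enforces $\mathcal{M}_g \subseteq \mathcal{M}_f$ and $g(M) = \mu + \vec{v}^M \cdot \vec{u}$ for every $M \in \mathcal{M}_f$, the minimum of $g$ differs from $\mu$ if and only if $\vec{v}^M \cdot \vec{u} \neq 0$ for some $M \in \mathcal{M}_f$.

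To show that the Perturbation Bound is at least $d$, I would take an arbitrary list of $d-1$ queries $Q_1,\ldots,Q_{d-1}$ and find a perturbation that fools them. The constraints $\vec{v}^{Q_i} \cdot \vec{u} = 0$ cut out a subspace of codimension at most $d-1$, whereas the orthogonal complement of $\text{span}\{\vec{v}^M : M \in \mathcal{M}_f\}$ has codimension exactly $d$ by the definition of $d$. A dimension count then produces a nonzero $\vec{u}$ lying in the former but not the latter; after scaling to be sufficiently small, $\vec{w}' := \vec{w} + \vec{u}$ lies in $\mathcal{G}(f)$ and defines a $g \in \mathcal{G}(f)$ consistent with all $d-1$ queries yet with $\min g \neq \mu$.

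Conversely, to show that the Perturbation Bound is at most $d$, I would exhibit $d$ queries that determine the minimum. Pick $M_1,\ldots,M_d \in \mathcal{M}_f$ whose vectors $\vec{v}^{M_1},\ldots,\vec{v}^{M_d}$ are linearly independent (possible by the definition of $d$) and query them. For any consistent $g \in \mathcal{G}(f)$, the equations $\vec{v}^{M_i} \cdot \vec{u} = 0$ extend by linearity to $\vec{v}^M \cdot \vec{u} = 0$ for every $M \in \mathcal{M}_f$, hence $g(M) = \mu$ for every such $M$. Since $\mathcal{M}_g \subseteq \mathcal{M}_f$, this forces $\min_S g(S) = \mu$, so $f$ cannot witness a Perturbation Bound of $d+1$. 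The symmetric and non-trivial variants go through verbatim once $\mathcal{G}(f)$ is specialized appropriately, because the defining containment $\mathcal{M}_g \subseteq \mathcal{M}_f$ is built in for each class. The one point I would be careful about, and what I expect to be the main (mild) obstacle, is the bookkeeping that $\vec{v}^S$ and any admissible $\vec{u}$ both vanish on coordinates where $w_i = 0$ (so that $\vec{w}+\vec{u} \in \mathcal{G}(f)$), but this is automatic from the definitions rather than a genuine difficulty.
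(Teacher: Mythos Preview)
Your proposal is correct and follows the same overall strategy as the paper: reduce via Observation~\ref{obs:dotprod} to a linear-algebra statement, then prove each inequality by a dimension count. The ``$\leq d$'' direction is identical to the paper's Lemma~\ref{lem:oneway}.

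For the ``$\geq d$'' direction you take a slightly different (and somewhat slicker) route. The paper finds a nonzero $\vec{z}$ in $X \cap Y$, where $X$ is the null space of the $d-1$ query vectors and $Y = \text{span}\{\vec{v}^M : M \in \mathcal{M}_f\}$; membership in $Y$ is used to guarantee $z_i = 0$ whenever $w_i = 0$, and then an $A A^T$ rank argument is needed to rule out $\vec{z} \perp Y$. You instead compare codimensions of $K$ (your $X$) and $Y^\perp$ directly to produce $\vec{u} \in K \setminus Y^\perp$, which bypasses the $A A^T$ step entirely. The trade-off is that your $\vec{u}$ is not automatically supported on the nonzero-weight coordinates, so the claim that this is ``automatic from the definitions'' is a slight overstatement: you need one more line, either projecting $\vec{u}$ onto the coordinates with $w_i>0$ (harmless because every $\vec{v}^S$ already vanishes elsewhere, so both $\vec{u}\in K$ and $\vec{u}\notin Y^\perp$ are preserved), or simply running the whole codimension argument inside that coordinate subspace from the start. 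With that caveat, your argument goes through and is marginally cleaner than the paper's.
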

\begin{proof}
We break the proof down into two lemmas, one establishing that the Perturbation Bound is at most the Generalized Cut Dimension, and one establishing that the Perturbation Bound is at least the Generalized Cut Dimension. We first establish the easy direction, that if $f(\cdot)$ has Generalized (Non-Trivial) Cut Dimension $d$, it witnesses a (Symmetric, Non-Trivial) Perturbation Bound of at most $d$. For simplicity of notation throughout the proof, we explicitly prove the standard case, but the claims for Symmetric and Non-Trivial follow verbatim.

\begin{lemma}\label{lem:oneway} For all (Symmetric) weight-based $f(\cdot)$, the (Symmetric, Non-Trivial) Perturbation Bound is at most the Generalized (Non-Trivial) Cut Dimension.
\end{lemma}
\begin{proof}
Say that $f(\cdot)$ has Generalized Cut Dimension $d$, and let $S_1,\ldots, S_d$ be such that $v^{S_1},\ldots, v^{S_d}$ form a basis for the span of $\{\vec{v}^S\}_{S \in \mathcal{M}_f}$. We claim that queries to $S_1,\ldots, S_d$ completely determine $g(S) = f(S)$ for all $S \in \mathcal{M}_f$ and all $g(\cdot) \in \mathcal{G}(f)$. If true, this establishes a set of $q$ queries for which there does not exist a $g(\cdot) \in \mathcal{G}(f)$ consistent with these queries for which $\min_S\{g(S)\} \neq \min_S\{f(S)\}$ (and therefore the Perturbation Bound for $f(\cdot)$ is at most $d$). 

Consider any $S \in \mathcal{M}_f$. Then we know by definition of the Generalized Cut Dimension that $\vec{v}^S = \sum_i c_i \vec{v}^{S_i}$ for some $c_1,\ldots, c_d \in \mathbb{R}$. We claim that this implies that $g(S)=\sum_i c_i g(S_i)$ \emph{for any $g(\cdot) \in \mathcal{G}(f)$}. This follows from the following equalities, which make use of Observation~\ref{obs:dotprod}.

\begin{align*}
g(S) &= \vec{v}^S \cdot \vec{w}'\\
&= \sum_i c_i \vec{v}^{S_i} \cdot \vec{w}'\\
&= \sum_i c_i \cdot g(S_i).
\end{align*}

Therefore, if we query $S_1,\ldots, S_d$ and learn that $g(S_i)=f(S_i)$, this fully determines $g(S) = f(S)$ for all $S \in \mathcal{M}_f$, and therefore establishes that the Perturbation Bound for $f(\cdot)$ is at most $d$.
\end{proof}

We now show the hard direction: Perturbation Bound is at least the Generalized Cut Dimension.

\begin{lemma}\label{lem:hardway}  Any (Symmetric) weight-based $f(\cdot)$ witnesses a (Symmetric, Non-Trivial) Perturbation Bound of $d$, the (Non-Trivial) Generalized Cut Dimension.
\end{lemma}
\begin{proof}
Let $T_1,\ldots, T_{d-1}$ be any $d-1$ sets queried. Let $X$ denote the subspace of vectors $\vec{y}$ which satisfy the linear equations $\vec{v}^{T_i} \cdot \vec{y} = 0$ for all $1 \le i \le d-1$. Observe that $X$ has dimension at least $m-d+1$. Let $Y$ denote the subspace of vectors spanned by $\{\vec{v}^S\}_{S \in \mathcal{M}_f}$. Observe that $Y$ has dimension $d$.

The dimension of $X$ ($\geq m-d+1$) and the dimension of $Y$ ($d$) sum to $> m$. This means that there exists a non-zero vector, $\vec{z} \in X \cap Y$.\footnote{To see why this is the case, write a basis $\mathcal{B}_X = \{v_1, v_2, \ldots, v_{m-q} \}$ of $X$ and a basis $\mathcal{B}_Y = \{w_1, w_2, \ldots, w_d \}$ of $Y$. If $\mathcal{B}_X \cap \mathcal{B}_Y$ is non-empty then we are of course done, otherwise $\mathcal{B}_X \cup \mathcal{B}_Y$ is a set of strictly more than $m$ vectors in $\mathbb{R}^m$. Hence they must be linearly dependent, implying we can write $\alpha_1 v_1 + \ldots \alpha_{m-q} v_{m-q} + \beta_1 w_1 + \ldots + \beta_d w_d = 0$ for some coefficients $\{ \alpha _i \}$, $\{ \beta_j \}$ that are not all zero. Then note that $\beta_1 w_1 + \ldots \beta_d w_d$ is clearly in both $X$ and $Y$, and cannot be zero as otherwise all coefficients would be zero (because both $\mathcal{B}_X$ and $\mathcal{B}_Y$ are linearly independent).} Because $\vec{z} \in X$, we can add $\varepsilon \vec{z}$ to $\vec{w}$ for any $\varepsilon$ and arrive at a $\vec{w}'$ which is consistent with the queries so far. Because $\vec{z} \in Y$, we must have $z_i = 0$ whenever $w_i = 0$ (because all $\vec{v}^S$ have $v_i = 0$ when $w_i = 0$). Therefore, there exists a sufficiently small $\varepsilon$ such that $\vec{w} + \varepsilon \vec{z}$ results in a $g(\cdot)$ which is consistent with all $d-1$ queries so far, and is in $\mathcal{G}(f)$, and also $\vec{w} - \varepsilon \vec{z}$ results in such a $g(\cdot)$ as well.

Consider now $\vec{z} \cdot \vec{v}^S$ for any $S \in \mathcal{M}_f$. If $\vec{z} \cdot \vec{v}^S > 0$, then when $\vec{w}' := \vec{w} - \varepsilon \vec{z}$, we have $g(S) < f(S)$, and therefore $\min_S \{g(S)\} < \min_S \{f(S)\}$, meaning that we have found the desired Perturbation Bound $g(\cdot)$ for these $d-1$ queries. If $\vec{z} \cdot \vec{v}^S < 0$ we can instead use $\vec{w}':= \vec{w} + \varepsilon \vec{z}$. So if these $d-1$ queries have no witness, it must be that $\vec{z} \cdot \vec{v}^S = 0$ for all $S \in \mathcal{M}_f$. In particular that this holds for the basis $\vec{v}^{S_1},\ldots, \vec{v}^{S_d}$ of $Y$. To summarize this paragraph: unless $\vec{v}^{S_i} \cdot \vec{z} = 0$ for all $i$ (note that these $S_i$ were not necessarily queried), then these $d-1$ queries have a witness for the Perturbation Bound. 


We will now establish that we can't have $\vec{z} \cdot \vec{v}^{S_i} = 0$ for all $i$, which will establish that in fact there is a witness for these $d-1$ queries (and all $d-1$ queries, since they were arbitrary). Consider that because $\vec{z} \in Y$, we can write $\vec{z} = \sum_i \beta_i \vec{v}^{S_i}$ for some $\vec{\beta}$ which is not $\vec{0}$. If $\vec{z} \cdot \vec{v}^{S_i} = 0$ for all $i$ we have

$$\sum_i \beta_i \vec{v}^{S_i} \cdot \vec{v}^{S_j} = 0, \quad \forall j.$$
Therefore, if we let $A$ denote the $d \times m$ matrix whose rows are the vectors $\vec{v}^{S_i}$, we get that:

$$(A \cdot A^T) \begin{bmatrix}
	\beta_1 \\
	\vdots \\
	\beta_d
	\end{bmatrix} = 0$$
 Because $\vec{v}^{S_1},\ldots, \vec{v}^{S_d}$ form a basis for $Y$ we know $A, A^T$, and $A \cdot A^T$ have rank $d$.\footnote{A short proof of this is through the singular value decomposition (SVD) of $A$. Write $A = U \Sigma V$ where $U \in \mathbb{R}^{d \times d}$, $\Sigma \in \mathbb{R}^{d \times m}$, and $V \in \mathbb{R}^{m \times m}$ (where $U$ and $V$ satisfy $UU^T = I$ and $VV^T = I$, and $\Sigma$ is diagonal with $d$ non-zero entries). Note then that $A \cdot A^T = U \Sigma V V^T \Sigma^T U^T = U(\Sigma \Sigma^T)U^T$. As $\Sigma\Sigma^T$ is diagonal of rank $d$ and $UU^T = I$, this is a singular value decomposition of $A\cdot A^T$, and directly implies that its rank is $d$.} But $\vec{\beta}$ is a non-zero vector in the kernel of the $d \times d$ matrix $A \cdot A^T$, which is a contradiction. Therefore, we must have $\vec{z} \cdot \vec{v}^{S_i} \neq 0$ for some $i$, implying that there exists the desired $g(\cdot)$ for any set of $d-1$ queries, and the Perturbation Bound is hence at least $d$.
\end{proof}

The proof of the theorem now follows directly from Lemmas~\ref{lem:oneway} and~\ref{lem:hardway}.
\end{proof}

Theorem~\ref{thm:equiv} lets us now restrict attention to the study of generalized cut dimension if we aim to prove lower bounds through the canonical perturbation approach. The subsequent sections establish that this is fruitful for symmetric, non-trivial SFM, but not for standard SFM.

\subsection{An Undirected Graph with Global Cut Dimension $3n/2-2$}
In this section, we provide an explicit undirected graph $G$ on $n$ vertices which has global cut dimension $3n/2-2$. This establishes the following theorem:

\begin{theorem}
The query complexity of Symmetric Non-Trivial SFM is at least $3n/2-2$.
\end{theorem}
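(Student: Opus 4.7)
The plan is to exhibit an undirected weighted graph $G$ on $n$ vertices whose Global Cut Dimension (Definition~\ref{def:dim}) is at least $3n/2 - 2$, and then invoke the undirected case of Theorem~\ref{thm:specdim} to convert this into the claimed query lower bound for Non-Trivial Symmetric SFM.

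Concretely, I would take $G$ to be the cycle-prism on $n$ vertices: label the vertices $v_1, \ldots, v_{n/2}$ and $w_1, \ldots, w_{n/2}$, add ``cycle'' edges $(v_i, v_{i+1})$ and $(w_i, w_{i+1})$ (indices mod $n/2$) each of weight $1$, and ``matching'' edges $(v_i, w_i)$ each of weight $2$. A short case analysis on the pair (number of cycle edges cut, number of matching edges cut), together with the parity observation that any cut of a cycle crosses an even number of cycle edges, shows the min-cut value is exactly $4$, attained by the $n$ singleton cuts (two cycle edges of weight $1$ plus one matching edge of weight $2$) and by the $\binom{n/2}{2}$ ``block'' cuts $\{v_a, w_a, v_{a+1}, w_{a+1}, \ldots, v_b, w_b\}$ for each proper contiguous arc $[a,b]$ (four cycle edges, no matching edges).

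The substantive step is to show that among the mincut indicator vectors $\vec{v}^S$, at least $3n/2 - 2$ are linearly independent. I proceed in two stages. First, I argue the $n$ singleton vectors span a subspace of dimension at least $n - 1$: setting $\sum_i \alpha_i \vec{v}^{\{v_i\}} + \sum_i \beta_i \vec{v}^{\{w_i\}} = \vec{0}$ and inspecting each edge coordinate, the cycle-edge equations $\alpha_i + \alpha_{i+1} = 0$ force $\alpha$ (and, symmetrically, $\beta$) to be alternating around the cycle, and the matching equations $\alpha_j + \beta_j = 0$ force $\beta = -\alpha$, leaving at most a one-parameter kernel. Second, I adjoin the $n/2 - 1$ matched-pair block cuts $B_i := \{v_i, w_i\}$ for $i = 1, \ldots, n/2 - 1$. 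The key identity is
\[
\vec{v}^{\{v_i\}} + \vec{v}^{\{w_i\}} - \vec{v}^{B_i} \;=\; 2\vec{e}_{m_i},
\]
which shows that adjoining $\vec{v}^{B_i}$ to the span is equivalent to adjoining the standard basis vector $\vec{e}_{m_i}$ on matching coordinates. An induction on $i$ then verifies that $\vec{e}_{m_i}$ is not already in the span of singleton vectors together with $\vec{e}_{m_1}, \ldots, \vec{e}_{m_{i-1}}$: the only matching vector with all-zero cycle coordinates lying in the singleton span is (a scalar multiple of) the alternating combination $\vec{e}_{m_1} - \vec{e}_{m_2} + \cdots$, and checking the coordinate at any index $j \in \{i+1, \ldots, n/2\}$ shows that such a combination's scalar must vanish, which is incompatible with producing a nonzero coordinate at $j = i$. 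Summing contributions yields dimension at least $(n - 1) + (n/2 - 1) = 3n/2 - 2$.

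The main technical obstacle is the second-stage induction: unlike the singleton analysis, which decouples cleanly across cycle and matching coordinates, the identity relating $B_i$ to $\vec{e}_{m_i}$ forces one to reason about the matching projection of the enlarged span together with the cycle-closure constraints, and one must carefully verify that the alternating combination (which lies in the singleton span exactly when $n/2$ is even) does not prematurely collapse the sequence of new matching directions provided by $\vec{e}_{m_1}, \vec{e}_{m_2}, \ldots$.
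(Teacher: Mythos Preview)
Your overall strategy is sound and the prism construction does give cut dimension at least $3n/2-2$ for even $n\ge 6$: the identity $\vec v^{\{v_i\}}+\vec v^{\{w_i\}}-\vec v^{B_i}=2\vec e_{m_i}$ is correct, the singleton kernel really is at most one-dimensional (the alternating relation, and only when $n/2$ is even), and your inductive check that $\vec e_{m_i}$ is new---by looking at the $m_{n/2}$ coordinate, which forces the alternating scalar to vanish---goes through.  So the proposal is essentially correct.

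It is, however, a genuinely different and heavier argument than the paper's.  The paper takes a graph consisting of one central vertex $v$ joined by unit-weight edges to $a$ disjoint pairs $\{w_i,w_i'\}$, plus one unit-weight edge $(w_i,w_i')$ inside each pair (so $n=2a+1$).  The mincuts are $\{w_i\}$, $\{w_i'\}$, $\{w_i,w_i'\}$, and the three edges of the $i$-th triangle are disjoint from every other triangle, so the indicator matrix is block-diagonal with $a$ blocks, each the rank-$3$ matrix with rows $(1,0,1),(0,1,1),(1,1,0)$.  The dimension $3a=3(n-1)/2$ is immediate---no kernel computation, no induction.  For even $n$ the paper simply hangs one extra weight-$2$ leaf off $v$.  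So the paper's decoupling buys a one-line independence proof and handles odd $n$ natively; your prism entangles all coordinates and forces the two-stage kernel/induction analysis you describe.

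One gap to patch: as written, your construction requires $n$ even (and implicitly $n/2\ge 3$ so the two cycles are genuine).  You should say how to obtain the bound for odd $n$; note that naively attaching a single degree-one vertex of weight $4$ to the $(n-1)$-vertex prism only yields $3(n-1)/2-1$, which falls just short of $\lceil 3n/2-2\rceil$, so a little more care (or a different gadget for the extra vertex) is needed there.
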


\begin{proof}
First, let $n$ be odd and $n \geq 3$. Then $n = 2a+1$ for some $a \geq 1$, so label the vertices of $G$ as $\{v,w_1,w'_1,w_2,w'_2,\ldots, w_a,w'_a\}$. For edges (all undirected), put an edge between $v$ and all other nodes, and an edge between $w_i$ and $w'_i$ for all $i$ (and no other edges). It is easy to see that every cut in $G$ has value at least $2$, and that the mincuts indeed have value $2$. The mincuts either separate $w_i$ from the rest of the graph, $w'_i$ from the rest of the graph, or $\{w_i,w'_i\}$ from the rest of the graph. 

For any $i \in [a]$, let $i_1$, $i_2$, $i_3$ denote the three positions in indicator vectors corresponding to the three edges $(v,w_i)$, $(v,w'_i)$, $(w_i,w'_i)$. Restricted to these positions, the indicator vectors for the three minimum cuts $\{w_i \}$, $\{ w_i ' \}$, $\{w_i,w'_i\}$ are $(1,0,1)$, $(0,1,1)$, and $(1,1,0)$, which span a subspace of dimension three. As these three cuts have zeroes for all other entries, the full vectors also span a subspace of dimension three. For each $i \in [a]$, the set of three indices referenced above are distinct, which means that taking all these indicator vectors together has rank $3a$. 

As $n=2a+1$, $3a = 3(n-1)/2$. So the claim holds when $n$ is odd. If $n$ is even, use exactly the same construction on $n-1$ nodes, and connect the remaining node to $v$ with an edge of weight two. Now there is one additional mincut (separating the extra node from the rest), so the dimension is $3a+1 = 3(n-2)/2+1 = 3n/2-2$. 
\end{proof} 

\subsection{Generalized Cut Dimension is at most $n+1$}
The previous section establishes that the Non-Trivial Symmetric Cut Dimension can be much larger than $n$, which leads to novel lower bounds. In this section, we establish that this approach will not yield novel lower bounds for standard SFM (and by Theorem~\ref{thm:equiv}, neither will the canonical perturbation argument for weight-based functions).	

\begin{theorem} The Generalized Cut Dimension of any weight-based function is at most $n+1$.
\end{theorem}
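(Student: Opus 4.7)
The plan is to exploit the structure of the minimizer lattice $\mathcal{M} := \arg\min_S f(S)$. Since $f$ is submodular, $\mathcal{M}$ is closed under union and intersection, so $\mathcal{M}$ is a distributive sublattice of $2^{[n]}$. My first step is to establish the identity
\[
\vec{v}^{S \cap T} + \vec{v}^{S \cup T} \;=\; \vec{v}^S + \vec{v}^T \qquad \text{for all } S, T \in \mathcal{M}.
\]
Coordinate-wise, the multiset inclusion $h(S \cap T) \,\overline{\cup}\, h(S \cup T) \subseteq h(S) \,\overline{\cup}\, h(T)$ gives $v^{S\cap T}_i + v^{S\cup T}_i \le v^S_i + v^T_i$. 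Pairing with $\vec{w}$ via Observation~\ref{obs:dotprod}, the resulting sum inequality is exactly $f(S \cap T) + f(S \cup T) \le f(S) + f(T)$, which must be tight since all four sets lie in $\mathcal{M}$. Since the only coordinates where $\vec{v}^{\cdot}$ can be nonzero have $w_i > 0$, equality of the weighted sum forces coordinate-wise equality.

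Next, by row--column rank duality, the dimension of $\text{span}(\{\vec{v}^S\}_{S \in \mathcal{M}}) \subseteq \mathbb{R}^m$ equals the dimension of the span (in $\mathbb{R}^{\mathcal{M}}$) of the column functions $\phi_i : \mathcal{M} \to \mathbb{R}$ defined by $\phi_i(S) := v^S_i$. The identity above says exactly that each $\phi_i$ is a \emph{modular} function on the lattice $\mathcal{M}$, i.e.\ $\phi_i(S \cap T) + \phi_i(S \cup T) = \phi_i(S) + \phi_i(T)$. So it suffices to show that the space of all modular functions $\mathcal{M} \to \mathbb{R}$ has dimension at most $n+1$.

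For this last step, I would show that every modular $\phi$ on $\mathcal{M}$ extends to a modular (linear-in-indicators) function on $2^{[n]}$; this yields the bound, since the latter space is $(n+1)$-dimensional, spanned by the constant $\mathbf{1}$ and the $n$ coordinate indicators $S \mapsto [j \in S]$. Using the equivalence $j \sim k$ iff $j$ and $k$ lie in the same members of $\mathcal{M}$, Birkhoff's representation identifies $\mathcal{M}$ with the lattice of down-sets of a poset $Q$ on the $p \le n$ ``mover'' equivalence classes. A short induction on $|D|$, applying modularity to $A = D \setminus \{q\}$ and $B = \downarrow q$ for a maximal $q \in D$, shows that any modular function on the down-set lattice has the parametric form $\phi(D) = c_0 + \sum_{q \in D} c_q$ for constants $c_0, c_1, \ldots, c_p$. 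Distributing each $c_q$ evenly among the individual elements of the corresponding mover class (and assigning $0$ to elements of always-in/always-out classes, absorbing their contribution into $c_0$) yields the desired extension.

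The main obstacle is the Birkhoff-style parametrization of modular functions on the distributive lattice $\mathcal{M}$: the inductive step hinges on choosing $q$ to be maximal in $D$ so that $D \setminus \{q\}$ remains a down-set and the modular identity reduces $\phi(D)$ to values on a strictly smaller down-set, allowing the recursion to unroll cleanly. Once the parametric form is in hand, constructing the extension and deducing the final dimension bound are essentially bookkeeping.
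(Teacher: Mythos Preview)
Your argument is correct, and it shares its first move with the paper's proof: both establish the modularity identity $\vec{v}^{S\cap T}+\vec{v}^{S\cup T}=\vec{v}^S+\vec{v}^T$ for $S,T\in\mathcal{M}_f$ by combining the multiset inclusion defining weight-based functions with tightness of the submodular inequality on minimizers (the paper's Lemma~\ref{lem:cover}). After that, the two arguments diverge. The paper works on the \emph{row} side: it defines the base sets $S_i:=\bigcap_{S\in\mathcal{M}_f,\,i\in S}S$ together with $\emptyset$, shows every minimizer is a union of base sets, and then uses the modularity identity inductively to place every $\vec{v}^S$ in the span of $\{\vec{v}^T\}_{T\in\mathcal{B}}$. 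You instead pass to the \emph{column} side via row/column rank equality, observe that each column is a modular function on the distributive lattice $\mathcal{M}_f$, and invoke Birkhoff's representation to parametrize such functions as $c_0+\sum_{q\in D}c_q$. Your route is a bit more structural and in fact yields the slightly sharper bound $p+1$, where $p\le n$ is the number of ``mover'' equivalence classes; the paper's route is more self-contained, needing no external lattice theory. It is worth noting that the two are secretly dual: the paper's base sets $S_i$ are exactly the principal down-sets $\downarrow q$ (together with the bottom element) in your Birkhoff picture, so the explicit spanning set the paper exhibits corresponds to the basis $\{\mathbf{1}\}\cup\{[q\in D]\}_q$ for modular functions that your parametrization produces. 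Finally, the ``extension to $2^{[n]}$'' step you outline is not actually needed once you have the parametric form---the $p+1$ functions $D\mapsto 1$ and $D\mapsto[q\in D]$ already span the modular functions on $\mathcal{M}_f$---but it does no harm.
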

\begin{proof}
First, recall that for any submodular $f(\cdot)$, the set of minimizes $\mathcal{M}_f$ is closed under union and intersection.\footnote{To see this, recall that $f(S \cup T) + f(S \cap T) \leq f(S) + f(T)$. If both of the sets on the RHS are minimizers, both sets on the LHS must be as well.} For each $i \in [n]$, define $S_i:= \cap_{S \in \mathcal{M}_f, i \in S}S$ (if there exists a minimizer containing $i$, otherwise let $S_i$ be null). If $S_i$ is not null, then $S_i \in \mathcal{M}_f$, because $\mathcal{M}_f$ is closed under intersection. Our goal will be to show that these $S_i$ (and $\emptyset$, if $\emptyset \in \mathcal{M}_f$) span $\mathcal{M}_f$ through a sequence of lemmas.

Define the \emph{base sets} $\mathcal{B}$ of $\mathcal{M}_f$ to be the set of all non-null $S_i$, together with $\emptyset$ (if $\emptyset \in \mathcal{M}_f$). It is clear that $\mathcal{B}$ has size at most $n+1$. It is also the case that every minimizer $S \in \mathcal{M}_f$ can be written as the union of elements in $\mathcal{B}$:

\begin{lemma} For all $S \in \mathcal{M}_f$, $S= \cup_{i \in S} S_i$. \end{lemma} 

\begin{proof} For any $i \in S$, we know that $i \in S_i$ which means that $S \subseteq \cup_{i \in S} S_i$. We need only show that for any $i \in S$, $S_i \subseteq S$. This is true by definition of $S_i$ because $S$ is a minimizer containing $i$. \end{proof} 

We next show that the base sets ``cover'' $\mathcal{M}_f$ in the following sense. Say that a set $S$ is covered by $\mathcal{B}$ if either: (a) $S \in \mathcal{B}$, or (b) $S$ can be written as the union of two sets which are covered by $\mathcal{B}$.

\begin{lemma}\label{lem:covered}Every set in $\mathcal{M}_f$ is covered by $\mathcal{B}$.
\end{lemma}
\begin{proof}
Assume for contradiction that there is some $S \in \mathcal{M}_f$ which is not covered by $\mathcal{B}$. Take the $S$ which minimizes $|S|$. Then clearly $S \notin \mathcal{B}$. So pick an arbitrary $i \in S$ and we can write $S = \cup_{j \in S} S_j = S_i \cup (\cup_{j \in S \setminus S_i} S_j)$. $S_i$ is clearly non-empty, and also because $S \notin \mathcal{B}$, it is not equal to $S$. Similarly, $\cup_{j \notin S_i} S_j$ is non-empty (or else $S_i$ would equal $S$), and is also not equal to $S$ (because it does not contain $i$). Because $|\cup_{j \in S \setminus S_i} S_j| < |S|$, it is covered. We have just written $S$ as the union of two covered sets, so therefore $S$ is also covered, a contradiction.
\end{proof}

Now, we are ready for the last step. We will argue that for all $g \in \mathcal{G}(f)$, knowledge of $g(S), g(T)$, and $g(S \cap T)$ suffices to deduce $g(S \cup T)$. We will then deduce that knowledge of $g(S)$ for all $S \in \mathcal{B}$ suffices to deduce $g(S)$ for all $S \in \mathcal{M}_f$.

\begin{lemma}\label{lem:cover}Let $S, T \in \mathcal{M}_f$. Then $\vec{v}^{S \cup T}=\vec{v}^S + \vec{v}^T - \vec{v}^{S \cap T}$.
\end{lemma}
\begin{proof}
Recall from the definition of weight-based that $h(S \cap T) \overline{\cup} h(S \cup T) \subseteq h(S) \overline{\cup} h(T)$. But recall also that $\sum_{i \in h(S \cap T) \overline{\cup} h(S \cup T)} w_i = \sum_{ i \in h(S) \overline{\cup} h(T)} w_i$ because all of $S, T, S \cap T, S \cup T$ are minimizers. As all $w_i$ are non-negative, this means that the only possible $i$ which are counted fewer times on the LHS than the RHS must have $w_i = 0$. This immediately means that $\vec{v}^{S \cap T} + \vec{v}^{S \cup T} = \vec{v}^S + \vec{v}^T$.
\end{proof}

\begin{corollary}\label{cor:main}Every $S \in \mathcal{M}_f$ has $\vec{v}^S \in \text{span}(\{\vec{v}^T\}_{T \in \mathcal{B}})$.
\end{corollary}
\begin{proof}
Lemma~\ref{lem:cover} establishes that if $S, T$, and $S \cap T$ are in $\text{span}(\{\vec{v}^T\}_{T \in \mathcal{B}})$, then so is $S \cup T$. Assume for contradiction that some $S \in \mathcal{M}_f$, $\vec{v}^S$ is not in $\text{span}(\{\vec{v}^T\}_{T \in \mathcal{B}})$, and take the one of minimal $|S|$. Then $S$ is covered by $\mathcal{B}$ by Lemma~\ref{lem:covered}, so we can write $S = A \cup B$, where $|A|,|B| < |S|$. $\vec{v}^A,\vec{v}^B$, and $\vec{v}^{A \cap B}$ are therefore in $\text{span}(\{\vec{v}^T\}_{T \in \mathcal{B}})$. By Lemma~\ref{lem:cover}, so then is $\vec{v}^S$, a contradiction.
\end{proof}

The proof is now concluded: we have argued that $|\mathcal{B}| \leq n+1$, and Corollary~\ref{cor:main} establishes that all of $\mathcal{M}_f$ is in the span of $\mathcal{B}$, so the Generalized Cut Dimension is at most $n+1$.

Importantly, observe that this proof fails to hold for the Generalized Non-Trivial Cut Dimension (it must, as we previously demonstrated an example with Generalized Non-Trivial Cut Dimension $3n/2-2$). The point of failure is that the set of Non-Trivial minimizers is not closed under intersection or union (if either the intersection is empty or the union is $[n]$). 
\end{proof}

\section{A $\binom{n}{2}$ Query Lower Bound for Non-Trivial SFM}\label{sec:nontriv}
In this section, we establish our lower bound for Non-Trivial SFM. The class of functions we consider will be the following:

\begin{definition}
A function $f(\cdot)$ is \emph{cost-based} if there exists a cost function $c:2^{[n]} \rightarrow \mathbb{R}_+$ with $c(T)=0$ whenever $|T| \leq 1$ such that $f(S) = \sum_{i \in S} f(\{i\}) - \sum_{T \subseteq S} c(T)$.
\end{definition} 

\begin{prop}
Every cost-based function is submodular.
\end{prop}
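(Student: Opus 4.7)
The plan is to prove submodularity via the diminishing marginal returns formulation: show that for any $S \subseteq S' \subseteq [n]$ and any $i \notin S'$, we have $f(S \cup \{i\}) - f(S) \geq f(S' \cup \{i\}) - f(S')$. This is equivalent to the standard $f(X \cup Y) + f(X \cap Y) \leq f(X) + f(Y)$ inequality (as noted in the paper's footnote) and is substantially cleaner to verify for a function defined by an explicit formula with an inclusion-sum.

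First I would compute the marginal gain directly. For any $S$ and $i \notin S$, the linear term $\sum_{j \in S} f(\{j\})$ contributes exactly $f(\{i\})$ to $f(S \cup \{i\}) - f(S)$. For the cost term, the subsets $T \subseteq S \cup \{i\}$ split into those contained in $S$ (which cancel) and those containing $i$, which are in bijection with subsets $T' \subseteq S$ via $T = T' \cup \{i\}$. So the marginal gain admits the clean form
\[
f(S \cup \{i\}) - f(S) = f(\{i\}) - \sum_{T' \subseteq S} c(T' \cup \{i\}).
\]

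Next I would apply this formula to both $S$ and $S'$ with $S \subseteq S'$ and subtract, getting
\[
\bigl[f(S \cup \{i\}) - f(S)\bigr] - \bigl[f(S' \cup \{i\}) - f(S')\bigr] = \sum_{T' \subseteq S',\ T' \not\subseteq S} c(T' \cup \{i\}).
\]
Since $c$ takes values in $\mathbb{R}_+$, every term on the right is non-negative, so the marginal gain is non-increasing in the containing set, establishing diminishing marginal returns and hence submodularity.

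I do not expect any real obstacle here: the only substantive input is the non-negativity of $c$, and the fact that $c(T) = 0$ for $|T| \leq 1$ is not actually needed for submodularity itself (it would only matter for normalizations like $f(\emptyset)=0$ or to make the $f(\{i\})$ notation unambiguous). The only thing to be careful about is the bookkeeping in the bijection $T \leftrightarrow T' \cup \{i\}$ when expanding the cost sum, but this is routine.
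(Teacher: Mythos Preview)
Your proposal is correct and essentially identical to the paper's proof: both compute the marginal $f(X\cup\{i\})-f(X)=f(\{i\})-\sum_{U\subseteq X}c(U\cup\{i\})$ and then invoke nonnegativity of $c$ to conclude diminishing marginal returns. The paper's write-up is slightly terser (it does not explicitly display the difference of the two marginals), but the argument is the same.
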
 

\begin{proof}
We will establish that $f(S \cup \{i\}) - f(S) \leq f(T \cup \{i\}) - f(T)$ whenever $T \subseteq S$ and $i \notin S$. Observe that for any $X \subseteq [n]$ with $i\notin X$ we have $f(X \cup \{i\}) - f(X) = f(\{i\}) - \sum_{U \subseteq X} c(U \cup \{ i \})$.  $f(\{i\})$ is independent of $X$, and the second term is clearly at least as large for $X=S$ than $X=T$ (as $c(U) \geq 0$ for every $U$). Therefore, $f(\cdot)$ has diminishing marginal returns and is submodular.
\end{proof}

\begin{theorem}The query complexity of Non-Trivial SFM is at least $\binom{n}{2}$.

\end{theorem}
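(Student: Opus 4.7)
The plan is to construct an adversarial proof analogous in spirit to that of Theorem~\ref{thm:2nbound}, exploiting the richer parameter space that cost-based functions afford.

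\textbf{Construction.} I would first define a family of cost-based functions $\{f_p\}_{p \in \binom{[n]}{2}}$, possibly together with a ``null'' reference function $f_\emptyset$. Each $f_p$ would be engineered to have a unique non-trivial minimizer that identifies $p=\{a,b\}$ (for concreteness, one might try to make $\{a,b\}$ itself the minimizer, with $f_p(\{a,b\})$ strictly smaller than the minimum non-trivial value of $f_\emptyset$ and of every $f_{p'}$ with $p'\neq p$). The design goal is that every query $S$ other than the pair $p$ (and possibly a small family of ``identifying'' sets) returns a value consistent with many of the $f_p$ simultaneously, so that sets other than pairs reveal essentially no information about which candidate is the hidden function.

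\textbf{Adversary.} I would then set up an adversary that maintains a pool $\mathcal{P}\subseteq\{f_\emptyset\}\cup\{f_p:p\in\binom{[n]}{2}\}$ of candidates consistent with the queries so far. For each query $S$, the adversary picks the answer that maximizes $|\mathcal{P}|$, exploiting the freedom of choosing the higher-order costs $c_p(T)$ for $|T|\geq 3$ to ensure the answer matches a canonical ``uninformative'' value for as many $f_p$ as possible. The key claim is that each query removes at most one candidate from $\mathcal{P}$: a size-$2$ query on $S=p$ is the only way to distinguish $f_p$ from the rest.

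\textbf{Lower bound.} Because the candidate functions are engineered to have pairwise distinct non-trivial minimizers, any correct output must pin down $\mathcal{P}$ to a single function. Starting from $|\mathcal{P}|=\binom{n}{2}+1$ and losing at most one function per query, at least $\binom{n}{2}$ queries are required.

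\textbf{Main obstacle.} The heart of the argument is actually constructing a family so that the $f_p$'s are genuinely cost-based (with all $c_p(T)\geq 0$) while remaining collectively indistinguishable under fewer than $\binom{n}{2}$ queries. A naive attempt such as $c_p(\{a,b\})=3$ and $c_p(T)=0$ otherwise fails because $f_p$ and $f_\emptyset$ then differ on \emph{every} superset of $\{a,b\}$, so the algorithm can distinguish them with $O(\log n)$ queries; and the natural ``cancellation'' that would force agreement on supersets of $\{a,b\}$ requires negative $c_p(T)$ for some $T\supseteq\{a,b\}$ of size $\geq 3$, which is forbidden. I expect the resolution to involve either (i) adaptively choosing the higher-order costs $c_p(T)$ in response to the queries made, much as $\sigma$ is chosen adaptively in Theorem~\ref{thm:2nbound}, or (ii) replacing the minimizer $\{a,b\}$ with a larger set (such as $[n]\setminus\{a,b\}$) whose identification still necessitates distinguishing among all pairs. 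Verifying the existence of such a construction, and checking that $c\geq 0$ is maintained throughout, is where the bulk of the technical content of the proof would reside.
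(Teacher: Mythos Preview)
Your framework is right, and you even name the correct resolution in option~(ii): use $[n]\setminus\{a,b\}$ rather than $\{a,b\}$ as the distinguished set. But the proposal stops short of executing it, and option~(i) (adaptive cost selection) turns out to be unnecessary: the paper's construction is entirely explicit and the ``adversary'' does nothing more than answer every query according to a fixed base function.

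Concretely, the paper defines a base cost-based function $f$ with $f(S)=|S|$ for $|S|\le n-2$, $f([n]\setminus\{k\})=0$ for all $k$, and $f([n])=-n^2$; its non-trivial minimum is $0$. For each pair $\{i,j\}$ it then defines a cost-based $f_{ij}$ that agrees with $f$ on \emph{every} set except $[n]\setminus\{i,j\}$, where $f_{ij}([n]\setminus\{i,j\})=-1$. Since the non-trivial minima of $f$ and of $f_{ij}$ differ, and the functions disagree at exactly one input, an adversary answering according to $f$ forces the algorithm to query all $\binom{n}{2}$ sets of size $n-2$.

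The technical point you flagged---that pushing value down at one set seems to require negative costs to restore agreement on its supersets---is handled by building the base function with enough positive cost already sitting on the size-$(n-1)$ sets. In $f$ one has $c([n]\setminus\{k\})=n-1$ for every $k$. To form $f_{ij}$, one adds cost $n-1$ at $[n]\setminus\{i,j\}$ (dropping the value there by $n-1$), \emph{removes} the cost at the two supersets $[n]\setminus\{i\}$ and $[n]\setminus\{j\}$ (setting them to $0$, which is still nonnegative and restores their values), and finally bumps $c([n])$ from $2n$ to $3n-1$ to re-balance the top set. Every cost stays nonnegative; the trick works precisely because the distinguished set is close to $[n]$, so it has only three proper supersets to fix. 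Your instinct that $\{a,b\}$ is the wrong choice was correct for exactly this reason.
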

\begin{proof}
Consider the following $\binom{n}{2}+1$ cost functions. Define $c( \cdot )$ so that $c(S) = 0$ if $|S| \leq n-2$, $c([n]\setminus \{i\}) = n-1$ for all $i$, and $c([n]) = 2n$. Call the associated function $f(\cdot)$ where we set $f(\{i\}) = 1$ for all $i$. Then $f(S) = |S|$ when $|S| \leq n-2$, $f([n] \setminus \{i\}) = 0$, and $f([n]) = -n^2$. 

For every $1 \le i < j \le n$ define $c_{ij}(\cdot)$ so that $c_{ij}(S) = 0$ if $|S| \leq n-3$. For sets of size $n-2$, set $c_{ij}([n] \setminus \{i,j\}) = n-1$, and $c_{ij}([n] \setminus \{k,\ell\}) = 0$ for all other $\{k,\ell\} \neq \{i,j\}$. For sets of size $n-1$, set $c_{ij}([n] \setminus \{i\}) = c_{ij}([n] \setminus \{j\}) = 0$, and $c_{ij}([n]\setminus \{k\}) = n-1$ for all $k \notin \{i,j\}$. Finally, set $c_{ij}([n]) = 3n-1$. Call the associated function $f_{ij}(\cdot)$ where we set $f_{ij}(\{\ell\}) = 1$ for all $\ell$. Observe that $f_{ij}(S) = |S|$ when $|S| \leq n-3$, $f_{ij}(S) = n-2$ if $|S| = n-2$ and $S \neq [n] \setminus \{i,j\}$, $f_{ij}([n] \setminus \{i,j\}) = -1$,  $f_{ij}([n] \setminus \{\ell\}) = 0$ for all $\ell$, and $f_{ij}([n]) = -n^2$.

Observe, importantly, that the non-trivial minimum of $f(\cdot)$ is $0$, while the non-trivial minimum of $f_{ij}(\cdot)$ is $-1$ for all $i,j$. Observe also that $f(\cdot)$ and $f_{ij}(\cdot)$ \emph{differ only on their evaluation for $[n]\setminus \{i,j\}$}. Therefore, an adversary could answer any query $S$ with $f(S)$. If the algorithm terminates with fewer than $\binom{n}{2}$ queries, then there is some $[n] \setminus \{i,j\}$ that has not been queried. Therefore, the adversary is free to decide that the function is either $f(\cdot)$ or $f_{ij}(\cdot)$. As the value of the minima for these two functions are distinct, the algorithm cannot be correct. Therefore, any correct algorithm for Non-Trivial SFM must make at least $\binom{n}{2}$ queries.
\end{proof}
\section{Conclusions and Open Questions}
We establish the first query lower bounds exceeding $n$ for SFM ($2n$), Non-Trivial Symmetric SFM ($3n/2-2$), and Non-Trivial SFM ($\binom{n}{2}$). Our asymmetric lower bounds are from direct constructions. Our symmetric lower bound arises from the novel \emph{cut dimension}. 

Our work leaves open a clear direction for future work: what is the maximum possible Global Cut Dimension for an undirected graph? Or more generally, what is the maximum possible Non-Trivial Symmetric Generalized Cut Dimension of a weight-based function? 

It is also of course generally important to further improve query complexity lower bounds for SFM variants (and also develop better algorithms). 

\bibliographystyle{alpha}
\bibliography{MasterBib}

\begin{thebibliography}{CLSW16}

\bibitem[BVZ01]{BoykovVZ01}
Yuri Boykov, Olga Veksler, and Ramin Zabih.
\newblock Fast approximate energy minimization via graph cuts.
\newblock {\em {IEEE} Trans. Pattern Anal. Mach. Intell.}, 23(11):1222--1239,
  2001.

\bibitem[CLSW16]{ChakrabartyLSW16}
Deeparnab Chakrabarty, Yin~Tat Lee, Aaron Sidford, and Sam~Chiu{-}wai Wong.
\newblock Subquadratic submodular function minimization.
\newblock {\em CoRR}, abs/1610.09800, 2016.

\bibitem[Cun85]{Cunningham85}
William~H. Cunningham.
\newblock On submodular function minimization.
\newblock {\em Combinatorica}, 5(3):185--192, 1985.

\bibitem[FI00]{FleischerI00}
Lisa Fleischer and Satoru Iwata.
\newblock Improved algorithms for submodular function minimization and
  submodular flow.
\newblock In {\em Proceedings of the Thirty-Second Annual {ACM} Symposium on
  Theory of Computing, May 21-23, 2000, Portland, OR, {USA}}, pages 107--116,
  2000.

\bibitem[GLS81]{GrotschelLS81}
Martin Gr{\"o}tschel, L{\'a}szl{\'o} Lov{\'a}sz, and Alexander Schrijver.
\newblock {The Ellipsoid Method and its Consequences in Combinatorial
  Optimization}.
\newblock {\em Combinatorica}, 1(2):169--197, 1981.

\bibitem[Har08]{Harvey08}
Nicholas J.~A. Harvey.
\newblock {\em Matchings, matroids and submodular functions}.
\newblock PhD thesis, Massachusetts Institute of Technology, Cambridge, MA,
  {USA}, 2008.

\bibitem[IFF01]{IwataFF01}
Satoru Iwata, Lisa Fleischer, and Satoru Fujishige.
\newblock A combinatorial strongly polynomial algorithm for minimizing
  submodular functions.
\newblock {\em J. {ACM}}, 48(4):761--777, 2001.

\bibitem[IO09]{IwataO09}
Satoru Iwata and James~B. Orlin.
\newblock A simple combinatorial algorithm for submodular function
  minimization.
\newblock In {\em Proceedings of the Twentieth Annual {ACM-SIAM} Symposium on
  Discrete Algorithms, {SODA} 2009, New York, NY, USA, January 4-6, 2009},
  pages 1230--1237, 2009.

\bibitem[Iwa03]{Iwata03}
Satoru Iwata.
\newblock A faster scaling algorithm for minimizing submodular functions.
\newblock {\em {SIAM} J. Comput.}, 32(4):833--840, 2003.

\bibitem[Kha79]{Khachiyan79}
Leonid~G. Khachiyan.
\newblock {A Polynomial Algorithm in Linear Programming}.
\newblock {\em Soviet Mathematics Doklady}, 20(1):191--194, 1979.

\bibitem[KKT09]{KohliKT09}
Pushmeet Kohli, M.~Pawan Kumar, and Philip H.~S. Torr.
\newblock P3 {\&} beyond: Move making algorithms for solving higher order
  functions.
\newblock {\em {IEEE} Trans. Pattern Anal. Mach. Intell.}, 31(9):1645--1656,
  2009.

\bibitem[KT10]{KohliT10}
Pushmeet Kohli and Philip H.~S. Torr.
\newblock Dynamic graph cuts and their applications in computer vision.
\newblock In {\em Computer Vision: Detection, Recognition and Reconstruction},
  pages 51--108. 2010.

\bibitem[LB11]{LinB11}
Hui Lin and Jeff~A. Bilmes.
\newblock Optimal selection of limited vocabulary speech corpora.
\newblock In {\em {INTERSPEECH} 2011, 12th Annual Conference of the
  International Speech Communication Association, Florence, Italy, August
  27-31, 2011}, pages 1489--1492, 2011.

\bibitem[LSW15]{LeeSW15}
Yin~Tat Lee, Aaron Sidford, and Sam~Chiu{-}wai Wong.
\newblock A faster cutting plane method and its implications for combinatorial
  and convex optimization.
\newblock In {\em {IEEE} 56th Annual Symposium on Foundations of Computer
  Science, {FOCS} 2015, Berkeley, CA, USA, 17-20 October, 2015}, pages
  1049--1065, 2015.

\bibitem[Orl07]{Orlin07}
James~B. Orlin.
\newblock A faster strongly polynomial time algorithm for submodular function
  minimization.
\newblock In {\em Integer Programming and Combinatorial Optimization, 12th
  International {IPCO} Conference, Ithaca, NY, USA, June 25-27, 2007,
  Proceedings}, pages 240--251, 2007.

\bibitem[Que98]{Queyranne98}
Maurice Queyranne.
\newblock Minimizing symmetric submodular functions.
\newblock {\em Math. Program.}, 82:3--12, 1998.

\bibitem[RSW18]{RubinsteinSW18}
Aviad Rubinstein, Tselil Schramm, and S.~Matthew Weinberg.
\newblock Computing exact minimum cuts without knowing the graph.
\newblock In {\em 9th Innovations in Theoretical Computer Science Conference,
  {ITCS} 2018, January 11-14, 2018, Cambridge, MA, {USA}}, pages 39:1--39:16,
  2018.

\bibitem[Sch00]{Schrijver00}
Alexander Schrijver.
\newblock A combinatorial algorithm minimizing submodular functions in strongly
  polynomial time.
\newblock {\em J. Comb. Theory, Ser. {B}}, 80(2):346--355, 2000.

\bibitem[Vyg03]{Vygen03}
Jens Vygen.
\newblock A note on schrijver's submodular function minimization algorithm.
\newblock {\em J. Comb. Theory, Ser. {B}}, 88(2):399--402, 2003.

\end{thebibliography}
\appendix

\section{Omitted Proofs from Section~\ref{sec:main}}\label{app:proofs}

We will make use of the following technical lemma:

\begin{lemma} If $R_i \subseteq S$, then $f_\sigma^{\vec{c}}(S) \le (|S|-i) \cdot (n + 2 - i)$. \end{lemma}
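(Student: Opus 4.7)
The plan is to do a straightforward case analysis based on whether $S$ is one of the important sets $R_k$ or not.

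\textbf{Case 1: $S = R_k$ for some $k$.} Since $R_i \subseteq S = R_k$ and the important sets are nested with sizes $0,1,\ldots,n$, we must have $i \leq k$. Then $f_\sigma^{\vec{c}}(S) = -c_k \leq 0$, whereas $(|S|-i)(n+2-i) = (k-i)(n+2-i) \geq 0$ (since $k \geq i$ and $i \leq n$ imply both factors are non-negative). So the inequality holds trivially.

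\textbf{Case 2: $S$ is not an important set.} Then $f_\sigma^{\vec{c}}(S) = (|S|-j(S))(n+2-j(S))$. Since $R_i \subseteq S$, the definition of $j(S)$ as the largest index with $R_{j(S)} \subseteq S$ gives $j(S) \geq i$. It therefore suffices to show that the function $g(j) := (|S|-j)(n+2-j)$ is non-increasing as $j$ ranges over $\{i, i+1, \ldots, j(S)\}$. Since $g$ is a quadratic in $j$ with minimum at $j^* = (|S| + n + 2)/2$, and since $|S| \leq n$ implies $j^* \geq |S| \geq j(S) \geq i$, the function $g$ is decreasing on the whole interval $\{0,1,\ldots,|S|\}$, hence in particular $g(j(S)) \leq g(i) = (|S|-i)(n+2-i)$.

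I expect this to be entirely routine; no real obstacle arises. The only thing one has to be slightly careful about is the monotonicity argument in Case 2, which requires checking that the vertex of the downward-then-upward parabola $g$ lies to the right of $|S|$. This follows immediately from $|S| \leq n < n+2$.
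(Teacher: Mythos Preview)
Your proof is correct and follows essentially the same two-case split as the paper. In Case~2 the paper is slightly more direct: since $j(S)\ge i$ and both factors are non-negative, one simply bounds $|S|-j(S)\le |S|-i$ and $n+2-j(S)\le n+2-i$ and multiplies, avoiding the parabola argument.
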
 

\begin{proof}
If $S = R_{j(S)}$, then $f_\sigma^{\vec{c}}(S) = -c_j \le 0 \le (|S|-i) \cdot (n + 2 - i)$, as desired. Otherwise because $j(S) \ge i$ (by definition of $j(S)$), we know $|S|-j(S) \le |S| - i$ and $n + 2 - j(S) \le n + 2 - i$, which implies that $f_\sigma^{\vec{c}}(S) \le (|S|-i)\cdot  (n + 2 - i)$ as desired.\end{proof}

\begin{proof}[Proof of Lemma~\ref{lem:submodSFM}]

Let $X$, $Y$ be any two subsets of $[n]$; we will show that $$f_{\sigma}^{\vec{c}}(X) + f_{\sigma}^{\vec{c}}(Y) \ge f_{\sigma}^{\vec{c}}(X \cup Y) + f_{\sigma}^{\vec{c}}(X \cap Y).$$ Note that if $X \subseteq Y$, the inequality is trivially satisfied, as $X \cap Y = X$ and $X \cup Y = Y$; the inequality is also trivially satisfied if $Y \subseteq X$.  Hence, we will assume that neither set is contained in the other; note that this means neither set could equal $\emptyset$ or $[n]$. From here we consider two separate cases. 

In the first case, assume neither $X$ nor $Y$ is an important set. Let $R_i$ be the largest important set that is a subset of $X$ and $R_j$ be the largest important set that is a subset of $Y$. Without loss of generality, let's assume that $i \ge j$. Let $A := (X \setminus R_i) \setminus (Y \setminus R_i)$, $B := (Y \setminus R_i) \setminus (X \setminus R_i)$, $C := (X \setminus R_i) \cap (Y \setminus R_i)$, and $D := Y \cap (R_i \setminus R_j)$ and for convenience define $a := |A|$, $b := |B|$, $c := |C|$, and $d := |D|$. From the definition of $f$ we have that $$ f_{\sigma}^{\vec{c}}(X) = (a + c) \cdot (n + 2 - i) $$ and  $$f_{\sigma}^{\vec{c}}(Y) = (b + c + d) \cdot (n + 2 - j).$$ First, we prove the inequality when $i = j$. In this case, $D = \emptyset$, but $A$ and $B$ are both non-empty (as otherwise either $X \subseteq Y$ or $Y \subseteq X$). As $X \cap Y$ contains $R_i$ we have $$ f_{\sigma}^{\vec{c}}(X \cap Y) \le c (n + 2 - i)$$ from Lemma A.1. As $X \cup Y$ contains $R_i$ we similarly have $$ f_{\sigma}^{\vec{c}}(X \cup Y) \le (a + b + c) \cdot (n + 2 - i).$$ Hence, $$ f_{\sigma}^{\vec{c}}(X \cup Y) + f_{\sigma}^{\vec{c}}(X \cap Y) \le (a + b + 2c) (n + 2 - i) = (a + c) (n + 2 - i) + (b + c) (n + 2 - i) = f_{\sigma}^{\vec{c}}(X) + f_{\sigma}^{\vec{c}}(Y)$$ which proves the inequality as $d=0$. We'll next prove the inequality assuming $i > j$. In that case, we again have that $R_j$ is a subset of $X \cap Y$ so $$ f_{\sigma}^{\vec{c}}(X \cap Y) \le ( c + d) (n + 2 - j)$$ by Lemma A.1. Similarly, $R_i$ is a subset of $X \cup Y$ so $$ f_{\sigma}^{\vec{c}}(X \cup Y) \le (a + b + c) (n + 2 - i).$$ Hence it is sufficient to prove $$ (a + c) \cdot (n + 2 - i) + (b + c + d) \cdot (n + 2 - j) \ge (c+d) (n + 2 - j) + (a + b + c) (n + 2 - i)$$ which reduces to $b (n + 2 - j) \ge b (n + 2 - i)$, which is true as $b \ge 0$ and $i > j$. Hence we have completed our analysis for the first case.

Our second case is when $X$ is an important set and $Y$ is not. Say $X = R_i$ for some $i \ge 1$ and let $R_j$ be the largest important set contained in $Y$. Clearly, $i > j$ (as otherwise we would have $X \subseteq Y$). Let $A := Y \setminus R_i$, $B := (Y \setminus R_j) \cap R_i$, and for convenience define $a := |A|$ and $b := |B|$. Note that $a > 0$ (as otherwise $Y \subseteq X$) and that there are exactly $a + b$ elements of $Y$ not in $R_j$. Because $X \cap Y$ contains $R_j$ but not $R_{j + 1}$ we know $$ f_{\sigma}^{\vec{c}}(X \cap Y) = b \cdot (n + 2 - j). $$ Also, as $X \cup Y$ contains $R_i$ we know that $$ f_{\sigma}^{\vec{c}}(X \cup Y) \le a \cdot (n + 2 - i).$$ Hence it is sufficient to prove $$ - c_i + (a + b) \cdot (n + 2 - j) \ge b \cdot (n + 2 - j) + a \cdot (n + 2 - i).$$ This inequality is equivalent to $- c_i  + a (i - j) \ge 0$, which is true since $a \ge 1, i - j \ge 1$ and $c_i \le 1$, hence concluding the second case.

By symmetry, we now also have the same conclusion if $Y$ is an important set and $X$ is not. Moreover, we need not consider the case where both are important sets as then one must be a subset of the other. Hence we have the result. \end{proof}

\section{On Cut Queries in Directed Graphs}\label{app:auxiliary}

In this section, we examine the limits of cut queries when learning the edges of a graph. This appendix is not directly relevant to our main results, but may be of interest for future work.
\begin{claim}
A directed weighted graph can be learned via cut queries up to directed cycles. That is, with cut queries one can learn a graph $G'$ that is equivalent to the true graph $G$ up to adding/deleting directed cycles. Moreover, no set of queries can determine the weight of a directed cycle.
\end{claim}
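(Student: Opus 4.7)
My plan is to characterize the kernel of the cut operator $\vec w \mapsto (f_{\vec w}(S))_{S\subseteq [n]}$, since two edge-weightings are indistinguishable by cut queries precisely when their difference lies in this kernel. I would show the kernel equals
\begin{equation*}
\mathcal{K} := \bigl\{\vec\delta \in \mathbb{R}^{n(n-1)} : \delta_{ji} = -\delta_{ij} \text{ and } \textstyle\sum_{j\neq i} \delta_{ij} = 0 \text{ for all } i\bigr\}.
\end{equation*}
For one direction, $f_{\vec\delta}(\{i\}) = \sum_{j\neq i}\delta_{ij}$, so $f_{\vec\delta}(\{i\}) = 0$ gives the row-sum condition, while the identity $f_{\vec\delta}(\{i\}) + f_{\vec\delta}(\{j\}) - f_{\vec\delta}(\{i,j\}) = \delta_{ij}+\delta_{ji}$ forces antisymmetry. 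For the other direction, a brief computation shows $f_{\vec\delta}(S) = -\sum_{i,j\in S}\delta_{ij}$ (using the row-sum condition), which vanishes by antisymmetry. The space $\mathcal{K}$ consists of real-valued antisymmetric circulations on the complete directed graph, and by the standard flow decomposition theorem coincides with the real span of the vectors $\vec 1_{C} - \vec 1_{C^{-1}}$ over simple directed cycles $C$.

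The algorithm then queries $f(S)$ for every $S\subseteq [n]$ and outputs any non-negative edge weighting $\vec w'$ consistent with the observed values (the true $\vec w$ witnesses existence, so such $\vec w'$ can be found by feasibility). By the kernel characterization, $\vec w - \vec w' = \sum_k \alpha_k(\vec 1_{C_k} - \vec 1_{C_k^{-1}})$ for some simple cycles $C_k$ and reals $\alpha_k$, which is precisely the statement that $G$ and $G'$ agree up to (fractional, signed) additions/deletions of directed cycles paired with their reverses.

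For the impossibility, fix any simple directed cycle $C=v_1\to v_2\to\cdots\to v_k\to v_1$, and compare the graph $G_1$ consisting solely of $C$ with unit weight against $G_2$ consisting solely of the reverse cycle $C^{-1}$ with unit weight: since any directed cycle returns to its starting vertex, the number of $C$-edges leaving $S$ equals the number entering $S$, and likewise for $C^{-1}$, so $f_{G_1}(S) = f_{G_2}(S)$ for every $S$; yet cycle $C$ has weight $1$ in $G_1$ and $0$ in $G_2$. The main technical subtlety is invoking flow decomposition to pass from ``element of $\mathcal{K}$'' to ``signed sum of directed cycle vectors''; given this as a black box, every remaining step is a direct linear-algebraic computation plus the explicit construction of the two indistinguishable cycle graphs.
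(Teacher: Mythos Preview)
Your approach is correct and takes a genuinely different, more linear-algebraic route than the paper. The paper proceeds constructively: it symmetrizes via $f'(S)=f(S)+f(V\setminus S)$ to learn $w_{uv}+w_{vu}$ for every pair, and learns each vertex's in- and out-degree from $f(\{u\})$ and $f(V\setminus\{u\})$; it then argues directly (first in the unweighted case, then weighted) that any two weightings agreeing on this data can be transformed into one another by repeatedly shifting weight $\alpha$ from a cycle $C$ to its reverse $C^{-1}$, essentially re-deriving flow decomposition by following mismatched edges until a cycle closes. You instead characterize the kernel of the entire cut operator as the cycle space $\mathcal{K}$ of $K_n$ (antisymmetric, divergence-free edge vectors) and invoke flow decomposition as a black box to write any kernel element as a real combination of $\vec 1_{C}-\vec 1_{C^{-1}}$.

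Your route is cleaner conceptually and makes the linear structure transparent; the paper's route is more self-contained (no external flow-decomposition lemma) and, as a side benefit, shows that $O(n^2)$ queries already determine $\vec w$ modulo $\mathcal{K}$, whereas your stated algorithm queries all $2^n$ subsets. Note, though, that your own identities $f(\{i\})=\sum_j \delta_{ij}$ and $f(\{i\})+f(\{j\})-f(\{i,j\})=\delta_{ij}+\delta_{ji}$ already show the kernel is cut out by the singleton and pair queries alone, so you could tighten your algorithm to $O(n^2)$ queries with no extra work. For the impossibility half, your explicit pair $(G_1,G_2)=(C,C^{-1})$ is exactly the phenomenon the paper uses, phrased there as ``a directed cycle contributes the same number of crossing edges in either direction.''
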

\begin{proof}
We first work over unweighted graphs and show that a graph can be learned up to the direction of directed cycles. Let $G(V,E)$ be a directed, unweighted graph. First, we note that for any two vertices $u$ and $v$ in $V$, we can learn if
\begin{itemize}
    \item both edges $(u,v)$ and $(v,u)$ exist,
    \item exactly one of the edges $(u,v)$ and $(v,u)$ exist (but not which one), 
    \item or neither of these edges exist. 
\end{itemize}

To see this, consider creating a new function $f'(\cdot)$ which outputs $f(S) + f(V \setminus S)$. Then $f'(\cdot)$ corresponds to an undirected \emph{weighted} graph $G'$ where the weight between two nodes is zero, one, or two (and equal to the number of edges between them in $G$). As $G'$ is undirected, we can learn $G'$ exactly using cut queries~\cite{RubinsteinSW18}.\footnote{To see this, observe that $f'(\{u\}) + f'(\{v\}) - f'(\{u,v\})$ is exactly twice the weight of the edge between $u$ and $v$ in $G'$.}

Moreover, for every vertex $u$, by querying $\{u\}$ and $V \setminus\{u\}$, we know the in degree and the out degree of $u$. 

Now, suppose two graphs $G$ and $G'$ both satisfy all the queries made so far. We claim that $G$ can be converted to $G'$ by flipping the direction of certain cycles. Let $e = (u,v)$ be an edge in $G$ that does not exist in $G'$. We know that $(v,u)$ must be an edge in $G'$. Suppose we flip edge $(v,u)$ in $G'$. Now, the in degree of $u$ in $G'$ is one less than the in degree of $u$ in $G'$, while the out degree of $u$ in $G'$ is one more than that of $G$. Hence, there is an edge $(u,v')$ in $G'$, and an edge $(v',u)$ in $G$. We flip this edge in $G'$. Continuing this way, we flip all edges in a path, until we reach $v$. If $G'$ is the same as $G$, we are done, else, we pick another edge and repeat this procedure. Hence, $G'$ and $G$ are the same, up to the directions of directed cycles. 

Moreover, for any cut queried, every directed cycle either does not contribute anything to the cut or adds exactly 1 to the cut (irrespective of the direction). Hence, the direction of cycles in a directed graph cannot be learned by cut queries. 

This argument can be extended to weighted graphs as well. For weighted graphs, we can learn the sum of the weights of edges $(u,v)$ and $(v,u)$ for all edges, as well as the in degree and out degree of every vertex. 

Suppose we have two graphs $G$ and $G'$ which satisfy all the queries made to learn the above. Similar to the unweighted case, we claim that $G'$ can be converted to $G$ by changing the weights of certain directed cycles. Let $(u,v)$ be an edge which has weight $w$ in $G$ and $w'$ in $G'$. We change the weight of $(u,v)$ in $G'$ to $w$, and add $w'-w$ to the weight of $(v,u)$ in $G'$. Now, the in degree of $u$ has increased by $w' - w$. Since the in degree of $u$ is the same in both $G$ and $G'$ initially, this implies that there are edges incident on $u$ in $G$ whose weights differ from those in $G'$ by exactly $w'-w$. We change the weights of each of those edges to match the ones in $G$, continuing till we complete each of these cycles. Hence, $G$ and $G'$ are identical up to the weights of certain directed cycles. 

Note that each time we decrease the weight of a cycle in one direction by $\alpha$, we increase the weight of the cycle in the other direction by exactly $\alpha$. That is, the sum of the weights of the cycle in both directions remains constant. Let's assume that a cut query cuts $k$ edges of this cycle. This implies that the cut query also cuts $k$ edges of the cycle in the opposite direction. Hence, we can only determine the sum of the weights of these two cycles, irrespective of the number of queries made. 

\end{proof}

\begin{claim}
For a weighted undirected graph, when making $s$-$t$ cut queries, the weight of edge $(s, u)$ cannot be learned for any vertex $u$. Similarly, the weight of edge $(u,t)$ cannot be learned for any vertex $u$.
\end{claim}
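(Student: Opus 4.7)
The plan is to exhibit, for any vertex $u \neq s,t$, two weighted undirected graphs on the same vertex set with different values of $w(s,u)$ that agree on every $s$-$t$ cut query. Since no deterministic algorithm can distinguish such a pair, no algorithm can recover $w(s,u)$ from $s$-$t$ cut queries alone.

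The key observation is a coupling between the contributions of edges $(s,u)$, $(u,t)$, and $(s,t)$ to any $s$-$t$ cut $S$ (i.e.\ any $S$ with $s \in S$ and $t \notin S$): edge $(s,u)$ is cut iff $u \notin S$, edge $(u,t)$ is cut iff $u \in S$, and edge $(s,t)$ is always cut. Since $\mathbf{1}[u \notin S] + \mathbf{1}[u \in S] = 1$ identically, the perturbation that simultaneously increases $w(s,u)$ and $w(u,t)$ by $\delta$ and decreases $w(s,t)$ by $\delta$ changes each $s$-$t$ cut value by $\delta \cdot \mathbf{1}[u \notin S] + \delta \cdot \mathbf{1}[u \in S] - \delta = 0$. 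As a concrete witness, on the three-vertex graph $\{s, u, t\}$ the weight assignments $(w(s,u), w(u,t), w(s,t)) = (1, 1, 2)$ and $(2, 2, 1)$ both yield cut values of $3$ on each of the two $s$-$t$ cuts, yet disagree on $w(s,u)$. If no edge $(s,t)$ is available (or it has weight less than $\delta$), the same coupling trick goes through with any auxiliary vertex $v$: decreasing $w(s,v)$ and $w(v,t)$ each by $\delta$ also shifts every $s$-$t$ cut value by exactly $-\delta$, since analogously $\mathbf{1}[v \notin S] + \mathbf{1}[v \in S] = 1$.

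The analogous claim for $w(u,t)$ follows by swapping the roles of $s$ and $t$ throughout: the perturbation that increases $w(u,t)$ and $w(s,u)$ by $\delta$ and decreases $w(s,t)$ by $\delta$ preserves every $s$-$t$ cut value by the same coupling identity, while changing $w(u,t)$. I do not anticipate any serious obstacle in the argument; the only subtlety is maintaining non-negativity of weights after perturbation, which is handled by the concrete small-weight example or by taking $\delta$ small relative to the initial weights of the compensating edges.
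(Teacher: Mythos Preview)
Your argument is correct: the perturbation that increases $w(s,u)$ and $w(u,t)$ by $\delta$ while decreasing $w(s,t)$ by $\delta$ leaves every $s$-$t$ cut value unchanged, so no set of cut queries can pin down $w(s,u)$. The paper's proof rests on the same underlying idea---exhibit a nonzero direction in the kernel of all cut-query functionals that moves $w(s,u)$---but packages it differently. It first reduces to a linear-algebra statement (show that the indicator $e_u$ for the edge $(s,u)$ does not lie in the span $\Pi$ of the query vectors $\alpha_S$) and then produces a kernel vector $\beta$ that puts weight $1$ on $(s,u)$ and $(u,t)$ and weight $-1/n$ on every other edge incident to $s$ or to $t$, spreading the compensation uniformly. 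Your choice to compensate on the single edge $(s,t)$ (or on the pair $(s,v),(v,t)$ for some auxiliary $v$) is more concrete and avoids the span/kernel formalism entirely. The paper's formulation, on the other hand, makes explicit that the obstruction is a rank deficiency and never needs to worry about which edges carry enough weight to absorb a nonnegative perturbation, since it only needs $\langle e_u,\beta\rangle\neq 0$ rather than a feasible reweighting. Both arguments prove the claim; yours is shorter and arguably cleaner for this particular statement.
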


\begin{proof}
We first note that we can always learn the weight of edges of the form $(u,v)$, with $u,v \not = s \not = t$. This can be done by querying $\{s\}$, $\{s,u\}$, $\{s,v\}$ and $\{s,u,v\}$. Hence, we can learn all edges except for those of the form $(s,u)$ and $(u,t)$ (because these queries are redundant or invalid in that case). After learning these weights, every query $S \cup \{s\}$ can be viewed as the sum of $n$ weights (the weights of edges from $S$ to $t$, and the weights of edges from $V\setminus S$ to $s$). Let us denote the weight of edge $(s,u)$ as $w_u$, and the weight of edge $(u,t)$ as $w'_u$. Every query $S \cup \{s\}$ can be written as a linear equation 
\[
\sum_{u \in S}w'_u + \sum_{u \in V\setminus S} w_u = c_S
\]
Let $\alpha_S$ denote a $2n$ dimensional vector with the coefficients of the above equation. Let $\mathbf{w}$ be a $2n$ dimensional vector with $w_u$ and the $w'_u$, for all $u$. The above equation can be written as 
\[
\langle \alpha_S, \mathbf{w} \rangle = c_S
\]
Let us consider the subspace $\Pi$ spanned by $\{ \alpha_S \colon S \subseteq V \setminus \{s,t\}\}$. Let $e_u$ denote the vector with a 1 in the position of edge $(s,u)$ and zeros elsewhere. If $e_u \in \Pi$, we can compute the value of $w_u$. We show that $e_u \not \in \Pi$, for all $u$. 

To show this, it is enough to describe a vector in the kernel of $\Pi$, whose dot product with $e_u$ is non-zero. Consider the vector $\beta$ with 1 in the position of edges $(s,u)$ and $(u,t)$, and $-\frac{1}{n}$ elsewhere. 
\[
\langle e_u, \beta \rangle = 1
\]
However, for any $S$, 
\[
\langle \alpha_S, \beta \rangle = 0
\]
Hence, we cannot compute the weight of edge $(s,u)$ for any vertex $u$. The same argument can also be made for the edge $(u,t)$. \end{proof}
\end{document}